
\documentclass[draftcls,letterpaper,11pt,onecolumn]{IEEEtran}


\usepackage{amsmath}
\usepackage{amsfonts,amssymb}
\usepackage{epsfig}
\usepackage{cite}
\usepackage{graphics}
\usepackage{verbatim} 
\usepackage{setspace}
\usepackage[bvect]{commands}
\usepackage{matrix}

\def\BibTeX{{\rmfamily B\kern-.05em{\scshape i\kern-.025em
b}\kern-.08em \TeX}}

\newtheorem{theorem}{Theorem}

\newtheorem{proposition}{Proposition}

\newcommand{\bh}{\mathbf{h}}

\newcommand{\eo}{\epsilon_0}
\newcommand{\ei}{\epsilon_1}
\newcommand{\doi}{\delta_{01}}
\newcommand{\dio}{\delta_{10}}

\newcommand{\intd}{\text{d}}
\newcommand{\Cg}{C}




\begin{document}
\title{
  Limited-Rate Channel State Feedback for Multicarrier Block Fading Channels
  \thanks{This work was supported by the NSF under grant CCF-0644344,
    DARPA under grant W911NF-07-1-0028 and
    U.S. Army Research Office under grant W911NF-06-1-0339.
    The material in this paper was
    presented in part at the IEEE International Conference on
    Communications (ICC), Beijing, China, May, 2008.}}

\author{\authorblockN{Manish Agarwal, Dongning Guo, and Michael L.
    Honig } \\
  \authorblockA{\normalsize Department of Electrical Engineering and Computer Science,
    Northwestern University \\
    \today
  }   }

\maketitle

\vspace{-0.7in}
\begin{abstract}
  The capacity of a fading channel can be substantially increased
  by feeding back channel state information from the receiver to
  the transmitter.
  With limited-rate feedback what state
  information to feed back and how to encode it are important
  open questions.
  This paper studies power loading in a multicarrier system using
  no more than one bit of feedback per sub-channel.
  The sub-channels can be correlated and full channel state
  information is assumed at the receiver.
  First, a simple model with $N$ parallel two-state (good/bad) 
  memoryless sub-channels is considered, where the channel state
  feedback is used to select a fixed number of sub-channels to
  activate.  The optimal feedback scheme is the solution to a vector
  quantization problem, and the associated performance for large
  $N$ is characterized by a rate distortion function. As $N$ increases,
  we show that the loss in forward rate from the asymptotic (rate-distortion)
  value decreases as $(\log N)/N$ and $\sqrt{(\log N)/N}$\,
  with optimal variable- and fixed-rate feedback codes, respectively.
  We subsequently extend these results to parallel Rayleigh block fading
  sub-channels, where the feedback designates a set of sub-channels, which
  are activated with equal power. Rate-distortion feedback codes are proposed
  for designating subsets of (good) sub-channels with 
  Signal-to-Noise Ratios (SNRs) that exceed a threshold. 
  The associated performance
  is compared with that of a simpler lossless source coding scheme,
  which designates groups of good sub-channels, where both
  the group size and threshold are optimized.
  The rate-distortion codes can provide a significant increase in forward rate
  at low SNRs.
\end{abstract}

\begin{keywords}
  Block fading, channel state feedback, limited feedback,
  multicarrier transmission, power control, rate distortion theory,
  Rayleigh fading, vector quantization.
\end{keywords}

\section{Introduction}
\label{intro}






Multicarrier transmission techniques, including orthogonal
frequency-division multiplexing (OFDM), provide a convenient way to
exploit frequency diversity in multipath fading channels.
Given the total transmit power, a substantial increase in the channel
capacity can be achieved if the power allocation across the
sub-channels is adapted to channel variations \cite{TseVis05}.
For example, consider the sum capacity of $N$ independent block Rayleigh fading
sub-channels with given total power or Signal-to-Noise Ratio (SNR).
If the power is equally spread over all $N$ sub-channels,
the capacity is upper bounded by the total SNR regardless of
$N$, whereas if the power is allocated according to (optimal)
water-filling, the capacity increases as $O(\log N)$ as 
$N$ increases \cite{SunHon08IT, SanNos07IT}.

The state or quality of the sub-channels is typically measured at the
receiver and sent to the transmitter through a feedback channel.
We refer to this as {\em channel state feedback (CSF)}.
Obviously, optimal power allocation requires a prohibitive (infinite)
amount of CSF in case of continuous channel state.
Even if the channel state can be discretized, the number of
sub-channels may exceed the total number of feedback bits.
Hence, what state information to feed back and how to encode the
feedback are important questions.

This work studies the use of limited CSF for
maximizing the achievable rate of multicarrier block fading channels.
It is assumed that the sub-channel states are known or can be
measured accurately at the receiver.  The channel state is encoded
using fewer than one bit per sub-channel and then sent to the
transmitter through a noiseless feedback channel.  The transmitter
chooses a subset of sub-channels to activate
based on the feedback.

The problem of encoding the feedback is essentially a vector
quantization (VQ) problem, where the channel state is
mapped to a given number of bits for later reconstruction.
Unlike the usual quantization problem, the reconstruction here is to
produce a power loading vector for the sub-channels, where the distortion
metric is the gap between the rate achieved using the feedback
and the capacity achieved with known channel state at the transmitter.

Multicarrier power allocation with limited-rate feedback has been
previously considered in \cite{LovHea05TVT, RonVor06TC, SunHon03GC,
  SunHon08IT}.  In particular,~\cite{LovHea05TVT} applies the Lloyd
algorithm to produce a codebook of power loading vectors, which
maximizes an objective
such as achievable rate.  Unfortunately,
the size of the codebook in \cite{LovHea05TVT}, and hence the search
complexity, grows exponentially with the amount of feedback. 
Other heuristic schemes with one bit feedback per sub-channel 
have been proposed in \cite{SunHon08IT, RonVor06TC, SanNos07IT}.

This paper investigates the 
trade-off between the forward data rate and the amount of CSF
for block-fading multicarrier channels assuming \emph{no more
than one bit} of feedback per sub-channel.
Furthermore, in contrast with the lossless feedback source coding schemes
analyzed in \cite{SunHon08IT}, here we consider the more general class of
lossy (rate-distortion) source codes.

We first consider a model with two fading states only.
Each sub-channel randomly assumes either a {\em good} or {\em bad} state 
during a coherence block.  
For the case of independent two-state sub-channels studied in
Section \ref{s:2}, the role of the feedback is to direct the
transmitter to select as many good sub-channels as possible to
activate subject to the power constraint.
The fundamental trade-off between the feedback rate and the sum
capacity can be characterized using rate distortion theory in the
limit of infinite number of sub-channels.  For given finite number
of sub-channels, we also quantify the gap between rates achievable by
random coding and the rate distortion bound.
Specifically, with variable-rate feedback codes the gap
decreases as $(\log N)/N$, whereas with fixed-rate codes the gap
decreases as $\sqrt{(\log N)/N}$.

We also compare the rate-distortion approach
with a simple lossless source coding scheme, which
reports as many good sub-channels as the feedback rate allows.
Numerical plots show that good codes in the rate distortion sense
typically achieve much higher forward rate.
The result is then extended to the case of
correlated two-state sub-channels in Section \ref{s:2corr}, where 
the sub-channel states are assumed to form a Markov chain.  
Upper and lower bounds on the
forward rate are derived as a function of the feedback rate. 

With the insights gained from the two-state channel model, we then
study the problem of limited CSF for Rayleigh fading
sub-channels.  The fading coefficient, or state
of each sub-channel is a
Circularly Symmetric Complex Gaussian (CSCG) random
variable during each coherence block.
The case of independent sub-channels is studied in Section \ref{s:ray}
whereas the case of correlated sub-channels is discussed in Section
\ref{s:raycorr}.
The state of each sub-channel is first reduced to
a binary variable by comparing its gain with a threshold.
Similar feedback codes as considered for the two-state channels is
used to instruct the transmitter which sub-channels to activate,
assuming the power is distributed evenly over the activated sub-channels.
The threshold is selected to maximize the forward rate 
given a fixed feedback rate.
It turns out that the trade-off
admits a similar characterization as that for two-state sub-channels.
Although reduction of Rayleigh states to binary states
induces loss, the scheme with optimized threshold
and a moderate amount of feedback performs close to optimal
water-filling with channel coefficients known at the transmitter.
In particular, given a total power constraint,
the scheme can achieve a forward rate, which has the
same order of increase with the number of sub-channels
as that of water-filling~\cite{SunHon08IT}.

Two heuristic lossless source schemes for the reduced (two-state)
version of the Rayleigh channel are also considered 
for comparison in Section \ref{s:ray}.
In particular, in one of the schemes, taken from
\cite{CheBer07ISIT}, the sub-channels are divided evenly into
groups and the feedback indicates the set of groups in which all
sub-channel gains exceed the threshold.
A binary state vector, indicating which groups to activate,
is then compressed using lossless source coding and fed back
to the transmitter.
The group size and threshold can be adjusted to maximize
the forward achievable rate, subject to the feedback rate constraint.
%
Such grouping, or clustering, of sub-channels to reduce feedback
overhead has also been studied in \cite{TanHea06WPMC} in a multiuser
setting.  Clustering sub-channels to reduce the training overhead and
peak-to-average power ratio was previously studied in
\cite{CimDan96GC}.
We characterize the growth in achievable rate with the number of
sub-channels (for large $N$) as a function
of the amount of feedback (which can also scale with $N$). Numerical
examples show that the analytical results 
are quite accurate for finite-size
systems of interest.  In general, these heuristic schemes achieve a smaller 
forward rate than for the rate-distortion schemes, 
given a fixed feedback rate.



\section{Independent Two-state Sub-channels}
\label{s:2}

Consider a bank of $N$ independent and statistically identical block
fading sub-channels.  During each coherence block, each sub-channel
randomly takes one of two states, namely ``good'' and ``bad,'' which
is known to the receiver.  The input is constrained such that up
to a fraction $p$ of the sub-channels can be activated by the
transmitter.  Suppose on average the amount of CSF is limited to $R_f$ bits
per sub-channel per coherence block.  The
problem is to design a feedback scheme to maximize the forward data
rate, i.e., to activate as many good sub-channels as possible.

\subsection{The Fundamental Trade-off via Rate Distortion Theory}
\label{dmc:rd}

Let the state of sub-channel $i$ be denoted by a Bernoulli random
variable\footnote{The following convention will be adopted
  throughout the paper: A boldface letter represents a 
  vector. An uppercase letter represents a
  random vector or variable (e.g., $\S$, $S_i$), and the 
  corresponding lower case letter represents a 
  specific realization (e.g., $\s$, $s_i$).
  In addition, $\log(\cdot)$ denotes natural logarithm.} 
  $S_i$, with the
probability of being a good state denoted as $\Probk{S_i=1}=q$.
Further, let 
the power loading variable $\hat{S}_i = 1$ if the $i^{th}$ sub-channel
is chosen to be activated and $\hat{S}_i = 0$ otherwise.  
Constrained by the feedback and transmission power, a feedback
scheme specifies a mapping from the set of binary channel state
vectors, whose Hamming weight is no greater than $pN$.

It is easy to see that the feedback scheme is no different than vector
quantization, where the channel state vector $\SSS=[S_1,\dots,S_N]$ is
mapped to $N R_f$ bits for recovery at the transmitter.
Constrained by the feedback rate, the
reconstruction may be prone to errors, and the quantization scheme
should be designed to achieve as few errors 
(or, as small a \emph{distortion}) in reconstruction as
possible.

The fundamental trade-off of the forward and feedback rates as
$N\rightarrow \infty$ can be addressed using rate distortion theory.
The source is a sequence of independent and identically distributed
(i.i.d.) Bernoulli$(q)$ random variables,
$S_1,S_2,\dots$.
The distortion measure can be described as $d_N(\sss,\hat{\sss}) =
\frac1N \sum^N_{i=1} d(s_i,\hat{s}_i)$ with
\begin{equation}  \label{eq:dssi}
  \textstyle d(s,\hat{s}) = \indicator{s>\hat{s}} =
  \begin{cases}
    1, \quad \text{if } s=1 \text{ and } \hat{s}=0, \\
    0, \quad \text{otherwise}.
  \end{cases}
\end{equation}
The metric accounts for \emph{missed opportunities}, i.e., good
sub-channels which are not activated, but does not penalize
activation of bad sub-channels, which we refer to
as {\em misfires}.
Further, the power loading vector has to satisfy a normalized weight
constraint:
\begin{equation}  \label{eq:wt}
  w(\hat{\sss}) = \frac1N \sum^N_{i=1} \hat{s}_i \le p.
\end{equation}
This additional challenge of incorporating the weight constraint on the
reconstruction distinguishes the problem from the classical rate
distortion problem concerning i.i.d.\ source and single-letter
distortion measure.
Though not obvious, the rate distortion problem
admits the following simple single-letter characterization.

\begin{theorem} \label{th:rd}
  For an i.i.d.\ Bernoulli($q$) source, given the weight constraint on
  every binary reconstruction, $w(\hat{\sss}) \le p$, and the distortion
  measure, $d(s,\hat{s}) = \indicator{s>\hat{s}}$, the rate distortion
  function is
  \begin{equation}    \label{eq:RD}
    R(D) = \min_{ P_{\hat{S}|S}: \, \substack{ \Exp \, d(S,\hat{S}) \le D \\
    \Prob\{\hat{S}=1\}\le p } } I(S;\hat{S})
  \end{equation}
  where $S\sim$ Bernoulli$(q)$.
\end{theorem}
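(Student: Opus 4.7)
The plan is to follow the classical Shannon proof of the rate-distortion theorem, adapting both the converse and the achievability to the hard per-codeword weight constraint $w(\hat{\sss})\le p$, which is the only nonstandard feature. The key observation in both halves is that this hard constraint reduces to the marginal constraint $\Prob\{\hat{S}=1\}\le p$ in~\eqref{eq:RD} through time-sharing.

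\emph{Converse.} Fix any $N$-block encoder-decoder of rate $R$ whose reconstruction $\hat{\SSS}$ always satisfies $w(\hat{\SSS})\le p$ and whose average per-letter distortion is at most $D$. The standard chain $NR\ge H(\hat{\SSS})\ge I(\SSS;\hat{\SSS})\ge \sum_i I(S_i;\hat{S}_i)$ (the last step using that $S_1,\ldots,S_N$ are independent) reduces matters to a single-letter bound via a time-sharing variable $T$ uniform on $\{1,\ldots,N\}$ and independent of the source. Setting $S=S_T$, $\hat{S}=\hat{S}_T$, the marginal of $S$ is Bernoulli$(q)$, and both constraints descend to single-letter averages: $\Exp d(S,\hat{S})\le D$, and $\Prob\{\hat{S}=1\}=\frac{1}{N}\Exp[w(\hat{\SSS})]\le p$ from the hard weight constraint. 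Since $I(S;\hat{S})=I(S_T;\hat{S}_T)\le I(S_T;\hat{S}_T|T)=\frac{1}{N}\sum_i I(S_i;\hat{S}_i)$, the pair $(S,\hat{S})$ is feasible in~\eqref{eq:RD}, hence $R\ge R(D)$.

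\emph{Achievability.} Let $P^*_{\hat{S}|S}$ attain the minimum in~\eqref{eq:RD}, and let $p^*=\Prob\{\hat{S}=1\}\le p$ be its induced marginal mass on $1$. Rather than use i.i.d.\ codewords, which would only satisfy the weight constraint on average, I would draw $2^{NR}$ codewords uniformly and independently from the type class $\mathcal{T}=\{\hat{\sss}\in\{0,1\}^N:w(\hat{\sss})=\lfloor p^*N\rfloor\}$, so every codeword satisfies $w(\hat{\sss})\le p$ by construction. A type-covering argument then shows that for every $R>R(D)+\delta$, with probability tending to one some codeword $\hat{\SSS}_k$ is strongly jointly typical with $\SSS$ under $P^*_{S,\hat{S}}$, which forces per-letter distortion at most $D+\epsilon$; encoding its index costs $NR$ bits. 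Taking $\epsilon,\delta\downarrow 0$ yields achievability up to rate $R(D)$.

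\emph{Main obstacle.} The crux of the theorem is the \emph{hard} weight constraint on every codeword, which the usual i.i.d.\ random codebook satisfies only in expectation. The constant-composition construction sidesteps this by restricting every codeword to lie in $\mathcal{T}$; the remaining technicality is verifying that the type-class distribution has the same exponential covering behavior as the product measure $P_{\hat{S}}^{*\otimes N}$, a standard fact in the method of types. Once this is available, the inequality form $\Prob\{\hat{S}=1\}\le p$ in~\eqref{eq:RD}---rather than an equality---is precisely the single-letter image of the hard constraint $w(\hat{\sss})\le p$ that the time-sharing argument in the converse already makes explicit, closing the loop between the two bounds.
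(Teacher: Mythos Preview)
Your proposal is correct and reaches the same result, but both halves take a different route from the paper. For the converse, you introduce a time-sharing variable $T$ uniform on $\{1,\ldots,N\}$ and use $I(S_T;\hat{S}_T)\le I(S_T;\hat{S}_T|T)$ together with feasibility of the induced single-letter pair; the paper instead argues directly that, by convexity of $I(S_i;\hat{S}_i)$ in $P_{\hat{S}_i|S_i}$ and the index symmetry of the objective and the additive constraints, any minimizer of $\sum_i I(S_i;\hat{S}_i)$ subject to $\frac{1}{N}\sum_i\Prob\{\hat{S}_i=1\}\le p$ and $\frac{1}{N}\sum_i\Exp d(S_i,\hat{S}_i)\le D$ can be taken index-symmetric, which immediately decouples into $N$ copies of the single-letter problem. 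For achievability, you draw codewords uniformly from the type class of weight $\lfloor p^*N\rfloor$ so the hard constraint is satisfied by construction and then appeal to a type-covering lemma; the paper instead generates an i.i.d.\ codebook with marginal $P_{\hat{S}}$ satisfying $\Prob\{\hat{S}=1\}\le p-\delta_1$ and \emph{purges} the exponentially small fraction of codewords that violate $w(\hat{\sss})\le p$. Your constant-composition approach is cleaner in that it avoids the slack $\delta_1$ and the purging step, at the cost of invoking the method of types; the paper's i.i.d.+purging argument stays entirely within the Cover--Thomas typicality framework. Both converse arguments are standard and essentially equivalent; the time-sharing formulation is the more portable one.
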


\begin{proof}
  The achievability part of the theorem is based on Shannon's
  random coding technique (see e.g., \cite{CovTho06}).  Fix
 $P_{\hat{S}|S}$ and some  $0<\delta_1<p$, which satisfy $\Exp\,
  d(S,\hat{S}) \le D$ and $\Prob\{\hat{S}=1\} \le p-\delta_1$.
  The code book of $2^{NR_f}$ codewords can be produced randomly with
  the marginal distribution $P_{\hat{S}}$.
  Further, an exponentially small fraction of codewords which violate
  the weight constraint \eqref{eq:wt} are purged.  It
  can be shown that for sufficiently large code length $N$, the random
  codebook achieves the distortion $D$ as long as the rate $R >
  I(S_i;\hat{S}_i) + \delta_2$.  The achievability part is thus proved
  because $\delta_1$ and $\delta_2$ can be chosen to be arbitrarily
  small.
  
  Showing the converse requires incorporating the weight constraint
  \eqref{eq:wt} into the standard technique of \cite{CovTho06}.
  Let $\hat{\S}$ represent the reconstruction of the random source vector $\S$.
  Consider any code of length $N$ with rate $R$ which satisfies the
  distortion and average weight constraints $\Exp[w(\hat{\S})] \le p$,
  which is a weaker than required in the theorem.\footnote{Theorem 
  \ref{th:rd} continues to hold even if the instantaneous input 
constraint \eqref{eq:wt} is replaced by an average 
constraint, namely $\Exp[w(\hat{\S})] \le p$.}    Then, due to the
  data processing theorem and the independence of $S_i$,
  \begin{align}
    NR 
    &\geq I(\S;\hat{\S}) \\
    &\geq \sum^N_{i=1} I(S_i; \hat{S}_i) \\
    \label{eq:shs}
    &\geq \min_{ P_{\hat{\SSS}|\SSS}: \, \substack{ \expect{ \frac1N
          \sum^N_{i=1} \hat{S}_i } \leq p \\
        \expect{ \frac1N \sum^N_{i=1} d(S_i,\hat{S}_i) } \leq D}}
    \sum^N_{i=1} I(S_i; \hat{S}_i).
  \end{align}
  The key task here is to break down the constraints on the
  distribution of the vector $\hat{\S}$ in \eqref{eq:shs} into
  constraints on the individual random variables.
  Note that $P_{\hat{\S}}$ is linear
  in $P_{\hat{\S}|\S}$ because the source distribution $P_{\S}$
  is fixed.  An important fact is that
  $I(S_i;\hat{S}_i)$ is convex in the distribution
  $P_{\hat{S}_i|S_i}$.  Because of the symmetry in the indexes $i$,
  any optimal distribution $P_{\hat{\S}|\S}$ that achieves the minimum
  of \eqref{eq:shs} must be symmetric over all indexes $i$.  Otherwise
  replacing all of them by their average yields smaller mutual
  information.
  Therefore, due to the symmetry and the additive nature of the
  constraints, \eqref{eq:shs} implies that the rate $R$ is lower
  bounded by $R(D)$ given in \eqref{eq:RD}.
\end{proof}

The minimization over the conditional distribution $P_{\hat{S}|S}$ in
\eqref{eq:RD} is equivalently over the crossover probabilities:
\begin{equation}    \label{eq:e01}
  \eo=P_{\hat{S}|S}(0|1)\; \text{ and }\;
  \ei=P_{\hat{S}|S}(1|0),
\end{equation}
where $\eo$ represents the probability of missing a good
sub-channel.
The mutual information $ I(S;\hat{S})$
can be expressed as the following function of $(\eo,\ei)$:
\begin{equation}\label{eq:rrfin}
  i(\eo, \ei) \triangleq H_2(p) - q H_2(\epsilon_0) - (1-q) H_2(\epsilon_1),  
\end{equation}
where $H_2(\cdot)$ stands for the binary entropy function.
Unless stated otherwise, the units of all information metrics
are bits.
Note that the weight constraint \eqref{eq:wt} should be tight at the
minimum because there is no penalty on misfires.  Thus the optimal
crossover probabilities satisfy $q(1-\epsilon_0) + (1-q)\epsilon_1 =
p$. 


Let the capacity of a good sub-channel be $C_1$ and the capacity of a
bad sub-channel be $C_0<C_1$.  
The average number of active good sub-channels is $Nq(1-\epsilon_0)$.
The trade-off between the capacity and the feedback rate is
characterized by as follows.

\begin{proposition}\label{prop:RDopt}
  Given $p$, $q$, and the feedback rate $R_f$ bits per sub-channel per
  coherence block, the maximum achievable forward data rate per
  sub-channel is
  \begin{equation}    \label{eq:C}
    C = q(1-\eo^*) (C_1-C_0) + p C_0,
  \end{equation}
  where the optimal proportion of missed good sub-channels $\eo^*$ is
  the solution to the following optimization problem:
  \begin{subequations} \label{eq:opt}
  \begin{align}
    \text{minimize:} & \quad \eo \\ 
    \text{subject to:} 
    &\quad H_2(p) - q H_2(\epsilon_0) - (1-q) H_2(\epsilon_1) \le R_f, 
    \label{eq:sp-fb} \\
    & \quad q(1-\epsilon_0) + (1-q)\epsilon_1 = p, 
    \label{eq:qep} \\
    & \quad 0 \le \epsilon_0, \epsilon_1 \le 1.
  \end{align}
  \end{subequations}
\end{proposition}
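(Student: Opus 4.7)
My plan is to derive Proposition \ref{prop:RDopt} as a direct corollary of Theorem \ref{th:rd}, by (i) expressing the per-sub-channel forward rate achievable with any feedback scheme as an explicit function of the crossover probabilities $(\eo,\ei)$ defined in \eqref{eq:e01}, and (ii) invoking the rate-distortion characterization to pin down the set of $(\eo,\ei)$ that is compatible with feedback rate $R_f$. Once both pieces are in place, the optimization \eqref{eq:opt} emerges mechanically.

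For step (i), observe that under a length-$N$ feedback code with crossover probabilities $(\eo,\ei)$, the expected number of activated good sub-channels per block is $Nq(1-\eo)$ and the expected number of activated bad sub-channels is $N(1-q)\ei$, by the law of large numbers applied over a long sequence of coherence blocks. Since each activated good sub-channel contributes $C_1$ and each activated bad sub-channel contributes $C_0$, the per-sub-channel forward rate is
\begin{equation*}
  C = q(1-\eo)\,C_1 + (1-q)\,\ei\, C_0.
\end{equation*}
Using the tight power constraint $q(1-\eo)+(1-q)\ei = p$ to eliminate $(1-q)\ei$ rewrites this as $C = q(1-\eo)(C_1-C_0) + pC_0$, which is precisely \eqref{eq:C}. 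In particular, maximizing $C$ over feasible feedback schemes is equivalent to minimizing $\eo$.

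For step (ii), I would apply Theorem \ref{th:rd} with the identification $\Exp\, d(S,\hat{S}) = q\eo$. The theorem says that, in the large-$N$ limit, a pair $(\eo,\ei)$ is achievable if and only if the associated $P_{\hat{S}|S}$ lies in the feasible set of \eqref{eq:RD}, which translates via \eqref{eq:rrfin} into the single constraint $H_2(p)-qH_2(\eo)-(1-q)H_2(\ei)\le R_f$. Combining this with the power-normalization constraint produces the feasible set of \eqref{eq:opt}, and minimizing $\eo$ over it yields $\eo^*$. Substituting $\eo^*$ into $C$ gives the stated capacity.

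The main obstacle is justifying that the weight constraint $q(1-\eo)+(1-q)\ei \le p$ can be taken to hold with equality at the optimum. The intuition is clear: since the distortion metric \eqref{eq:dssi} imposes no penalty on misfires, and increasing $\ei$ relaxes the mutual-information constraint \eqref{eq:sp-fb} (because $H_2$ is increasing on $[0,1/2]$ and one can check $\ei \le 1/2$ at the optimum whenever $p\le 1/2$), any slack in the weight constraint can be converted into a strictly smaller $\eo$. I would formalize this with a short monotonicity argument on $i(\eo,\ei)$ in $\ei$ over the relevant range, which combined with the fact that $\eo$ enters the rate constraint only through $H_2(\eo)$ shows that saturating the weight constraint is without loss of optimality. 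Given that, the remaining content of the proposition is just algebra already carried out above.
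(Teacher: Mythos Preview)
Your proposal is correct and follows essentially the same route as the paper: the paper does not give a separate formal proof of Proposition~\ref{prop:RDopt} but presents it as an immediate consequence of Theorem~\ref{th:rd}, noting (just before the proposition) that the weight constraint must be tight ``because there is no penalty on misfires,'' computing $i(\eo,\ei)$ as in \eqref{eq:rrfin}, and observing that the average number of active good sub-channels is $Nq(1-\eo)$. Your write-up simply makes these steps more explicit, including a slightly more careful discussion of why the weight constraint saturates.
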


The optimization problem \eqref{eq:opt} can be easily solved numerically.
Clearly, the maximum forward data rate increases as the feedback rate
increases, but the return vanishes beyond a certain point.  The minimum
feedback rate necessary for achieving the capacity can be
determined by tentatively removing the feedback constraint
\eqref{eq:sp-fb}.
If $p\ge q$, one can activate all good sub-channels so that $\eo =0$
with $\ei = \frac{p -q}{1-q}$, whereas if $p<q$, then $\eo$ can be as
small as $1-\frac{p}{q}$ by choosing $\ei=0$.
Substituting these values into \eqref{eq:rrfin}, the forward rate 
saturates at the maximum feedback rate,
\begin{equation}\label{eq:rreqmax}
  \overline{R} = \begin{cases}
  H_2(p) - qH_2\left(\frac{p}{q}\right), & \textrm{if } p \le q,\\
  H_2(1-p) - (1-q)H_2\left(\frac{1-p}{1-q}\right), & \textrm{if } p>q.
  \end{cases}
\end{equation}
For any $R_f \le \overline{R}$, the constraint \eqref{eq:sp-fb} is tight
and $\eo^*$ can be calculated by solving the simultaneous equations
\eqref{eq:sp-fb} and \eqref{eq:qep}, which can be easily reduced to a
fixed-point equation.

\subsection{Performance Bounds for Finite Number of Sub-channels}

Proposition \ref{prop:RDopt} characterizes the asymptotic trade-off as
the number of sub-channels $N$ goes to infinity.  For a
practical situation with finite $N$, the result
needs refinement.
Note that the solution to Proposition \ref{prop:RDopt}
provides an upper bound on the forward data rate for finite $N$,
because the converse shown in the proof of
Theorem \ref{th:rd} holds for all $N$.
In the following, we consider random feedback codes and derive a
lower bound for the achievable forward data rate with given feedback
constraint.

\subsubsection{Fixed-Length Constant-Composition Feedback Code}


Note that the solution to Proposition~\ref{prop:RDopt} upper bounds the
forward data rate with the average input power constraint
$E[w (\hat{\S})] = p$ and \emph{average} feedback rate
of $N i(\eo, \ei)$ bits per coherence block. Here we impose 
two additional constraints without loss of generality:
1) The binary reconstruction vectors have constant 
composition, that is, $w(\hat{\s}) = p$ for all the 
vectors $\hat{\s}$ in the feedback codebook; and 2) The feedback is
at most $N i(\eo, \ei)$ bits every coherence block. The second 
restriction implies that there are at most $2^{i(\eo, \ei)N}$
codewords.

The following proposition gives a lower bound on the
achievable forward rate given these additional constraints. 

\begin{proposition}\label{prop:bin-fix}
Let the number of feedback bits per coherence block be 
$N\,i(\eo,\ei)$, where $\eo, \ei > 0,  \not = 0.5$ are the solution
to \eqref{eq:opt}. 
Then $\exists \, N_o < \infty$ such that for $N \ge N_o$, 
the ergodic capacity achieved with the fixed-length constant-composition 
feedback code is lower bounded as
\begin{equation}\label{eq:cfix}
  C_{fixed} \ge \left( 1 - 2\sqrt{\frac{\log(Nq)}{Nq}}\right) C
\end{equation}
where $C$ is given by \eqref{eq:C}. 
\end{proposition}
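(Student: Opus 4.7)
The plan is to prove the bound via constant-composition random coding, combined with Chernoff concentration for both the source composition and the codebook covering. I would construct a random codebook $\mathcal{C}$ of size $M=2^{\lfloor N\,i(\eo,\ei)\rfloor}$ by drawing each codeword independently and uniformly from the set of binary vectors of Hamming weight exactly $pN$; this automatically satisfies the hard power constraint \eqref{eq:wt} and the constant-composition requirement. Given a source realization $\bs$, assign it to the codeword in $\mathcal{C}$ that covers the largest number of its good sub-channels, and track the expected fraction of missed good sub-channels.

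The first step is to control the source weight $W=\sum_{i=1}^{N} S_i\sim\mathrm{Binomial}(N,q)$. Chernoff's bound gives $\Prob\{W<Nq(1-\gamma_N)\}\le\exp(-Nq\gamma_N^2/2)$, so choosing $\gamma_N=c_1\sqrt{\log(Nq)/(Nq)}$ makes the atypical probability at most $O(1/(Nq))$. On the atypical event I would pessimistically declare every good sub-channel missed; this contributes only an $O(1/(Nq))$ loss to the expected forward rate and is therefore absorbed into the lower-order terms.

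The second step is the covering analysis on the typical event. Conditioned on $\bs$ having $k\approx Nq$ good sub-channels, the overlap $X=\sum_{i}s_i\hat{s}_i$ between $\bs$ and a uniformly random codeword of weight $pN$ is hypergeometric. Using the method of types (equivalently, Stirling applied to ratios of binomial coefficients), the probability that a single random codeword achieves overlap at least $k(1-\eo-\delta_N)$ admits the asymptotic form $2^{-N\,i(\eo,\ei)+N\phi(\delta_N)}$ with $\phi(\delta_N)>0$; Taylor expansion of $H_2$ around $\eo$ and $\ei$ makes $\phi$ linear in $\delta_N$ with a strictly positive coefficient precisely because $\eo,\ei\ne 1/2$, which is where the non-degeneracy hypothesis of the proposition is used. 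Choosing $\delta_N=c_2\sqrt{\log(Nq)/(Nq)}$ with $c_2$ large enough and applying $(1-p_{cover})^M\le\exp(-Mp_{cover})$ across the $M$ independent codewords drives the codebook-failure probability to zero faster than any polynomial in $N$.

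The third step is to translate the combined distortion bound $\Exp\,d(S,\hat{S})\le q\eo^* + q(\gamma_N+\delta_N)+o(\gamma_N+\delta_N)$ into a lower bound on the forward rate via \eqref{eq:C}. Since the dominant capacity term $q(1-\eo)(C_1-C_0)$ is linear in the fraction of covered good sub-channels, balancing $c_1$ and $c_2$ yields a multiplicative factor at most $2\sqrt{\log(Nq)/(Nq)}$ in the rate loss, which gives the stated bound, with $N_o$ chosen as the smallest integer beyond which all lower-order corrections are dominated. The main obstacle is the covering estimate: obtaining the sharp $\sqrt{\log(Nq)/(Nq)}$ slack — rather than the weaker $O(N^{-1/3})$ one gets from a naive Berry--Esseen argument — requires a method-of-types bound that is tight at the exponent with a polynomially bounded prefactor, and it is exactly this sharpness that breaks down as $\eo$ or $\ei$ approaches $1/2$, since $H_2'$ vanishes there and the leading coefficient of $\phi(\delta_N)$ degenerates.
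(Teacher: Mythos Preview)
Your approach is correct and essentially identical to the paper's: a random constant-composition codebook of weight-$pN$ vectors, Chernoff concentration for the binomial source weight with slack of order $\sqrt{\log(Nq)/(Nq)}$, a Stirling/method-of-types estimate for the single-codeword covering probability, and a Taylor expansion of $i(\eo,\ei)$ to convert distortion slack into rate slack (which is precisely where $\eo,\ei\neq 1/2$ enters, since $H_2'$ vanishes there). The only cosmetic difference is the order of operations --- the paper first oversizes the codebook by a $\mathrm{poly}(N)$ factor to cover the exact type class and then shrinks it via a slack $\delta=O((\log N)/N)$, whereas you fix the codebook at $2^{Ni(\eo,\ei)}$ from the start and absorb the polynomial prefactor into your $\delta_N$ --- but in both arguments the dominant $\sqrt{\log(Nq)/(Nq)}$ loss comes from the source-concentration step, so the final bound is the same.
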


The proof is given in Appendix \ref{ap:prop:bin-fix}.
The proposition implies that, for large enough $N$, 
the difference between the upper bound \eqref{eq:C} 
and achievable forward rate per sub-channel approaches 
zero at the rate $O\big(\sqrt{(\log N)/N}\,\big)$.
This further implies that the sum rate across the 
$N$ sub-channels incurs a loss, which increases as
$O\big(\sqrt{N \log N }\,\big)$
compared to $NC$. 
The proof basically follows the random coding technique of 
Goblick \cite{Goblic62} for analyzing the convergence rate 
of the rate distortion function for general sources and fixed-length block 
codes.
The contribution in this work is to incorporate the additional 
constant-composition
constraint  and to simplify the analysis by exploiting  
the binary structure of the source and the reconstruction.\footnote{We
avoid the use of complicated \emph{partition functions} in \cite{Goblic62} by 
using a Chernoff bound to evaluate the tail probability distributions.} 



Although the result in Proposition \ref{prop:bin-fix} is stated
for large $N$, more refined lower bound is
derived in Appendix \ref{ap:prop:bin-fix} which holds
for any finite $N$. 

\subsubsection{Variable-Length 
 Feedback Codes}

We note that fixed-length codes can \emph{cover} a subset of most
probable channel state vectors, but are unable to adapt to deviations
from typical channel conditions.
In the following, we analyze the performance of variable-length
feedback codes.
A variable amount of feedback is allowed during each coherence block
as long as the average number of feedback bits is $N i(\eo, \ei)$.
The instantaneous power constraint is replaced by an average power
constraint.

\begin{proposition}\label{prop:bin-var}
  Let the average number of feedback bits per coherence block be
  $Ni(\eo,\ei)$, where $\eo, \ei > 0$ are the solution to Proposition
  \ref{prop:RDopt} and $\eo, \ei \neq \frac12$.  Then $\exists \, N_1
  < \infty$ such that for every $N \ge N_1$, the ergodic forward rate
  achieved with a variable-length feedback code, under an average
  input power constraint $E[w(\hat{\S})] = p$, is lower bounded as
  \begin{equation}\label{eq:cap-var}
    C_{var} \ge \left(1-
 \frac{6}{q[H_2'(\eo)+ H_2'(\ei)]} 
\frac{\log_2 N}{N}\right) C
  \end{equation}
  where $C$ is given in \eqref{eq:C}.
\end{proposition}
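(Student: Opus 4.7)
The plan is to build a random variable-length feedback code that exploits two relaxations absent from Proposition~\ref{prop:bin-fix}: the weight constraint applies only on average (removing the constant-composition requirement), and the per-block description length may vary. Together these should eliminate the dominant $\sqrt{(\log N)/N}$ penalty of \eqref{eq:cfix} and leave only a $(\log N)/N$ term attributable to the finite-blocklength covering slack.

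First, I would choose target crossover probabilities slightly looser than the rate-distortion optimizers $(\eo,\ei)$: set $\eo'=\eo+\Delta$ and $\ei'=\ei+q\Delta/(1-q)$, which keeps the pair on the weight curve $q(1-\eo)+(1-q)\ei=p$. A one-dimensional Taylor expansion along this curve gives
\begin{equation}
i(\eo,\ei)-i(\eo',\ei')=q\bigl[H_2'(\eo)+H_2'(\ei)\bigr]\Delta+O(\Delta^2),
\end{equation}
which is exactly where the denominator in \eqref{eq:cap-var} originates. The hypothesis $\eo,\ei\ne 1/2$ is used precisely to keep $H_2'(\eo)+H_2'(\ei)$ bounded away from zero so that this linear term dominates. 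Choosing $\Delta$ of order $(\log N)/(Nq[H_2'(\eo)+H_2'(\ei)])$ opens a rate margin of order $\log N$ bits per coherence block, which I will spend on both the covering penalty and the variable-length overhead.

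Next, I would construct a random codebook of $M=2^{Ni(\eo',\ei')}$ i.i.d.\ codewords drawn from the marginal $P_{\hat S}$ induced by $(\eo',\ei')$, without purging and without enforcing a fixed composition, and encode each source realization $\bS$ by searching for any jointly typical codeword $\hat{\bS}$ under $P_{S\hat S}$. With a covering rate slack of $\Theta((\log N)/N)$, the failure probability can be pushed down to $O(1/N)$. A one-bit flag signals the mode; on the rare failure event the flag triggers an escape transmission of $\bS$ raw (length $N$), and the transmitter sets $\hat{\bS}=\bS$, incurring zero distortion. The expected description length is then $Ni(\eo',\ei')+1+N\cdot O(1/N)=Ni(\eo,\ei)+O(1)$ bits per block, well within budget; a microscopic perturbation of $(\eo',\ei')$ restores $\E{w(\hat{\bS})}=p$ exactly without disturbing the asymptotics.

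Finally, the achieved average distortion $\E{d(S,\hat S)}\le\eo+\Delta$ substituted into \eqref{eq:C} yields $C_{var}\ge C-q(C_1-C_0)\Delta$, so the relative loss is of order $(\log N)/\bigl(Nq[H_2'(\eo)+H_2'(\ei)]\bigr)$, matching \eqref{eq:cap-var} up to the leading constant. The hard part will be bookkeeping three small slack parameters jointly — the covering rate margin, the joint-typicality tolerance, and the shift $\Delta$ — so that each contributes $O((\log N)/N)$ and the cumulative leading constant does not exceed $6$. This is delicate because $\Delta$ must simultaneously be large enough to absorb the covering slack (which tempts it upward toward $\sqrt{(\log N)/N}$) and small enough that the induced forward-rate loss matches the claim, and because the multiplicative factor $(C_1-C_0)/C$ in the rate loss has to be absorbed cleanly against the $1/q$ in the denominator of the claimed bound.
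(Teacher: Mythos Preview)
Your scheme has a genuine gap at the step ``with a covering rate slack of $\Theta((\log N)/N)$, the failure probability can be pushed down to $O(1/N)$.'' For a fixed i.i.d.\ codebook of size $2^{Ni(\eo',\ei')}$ searched by joint typicality, this is false. The reason is that the source composition $L=\frac1N\sum_i S_i$ fluctuates around $q$ on the scale $1/\sqrt{N}$, and whenever $|L-q|$ exceeds your typicality tolerance, \emph{no} codeword is jointly typical with $\s$ and you escape. To make $\Probk{|L-q|>\eta}=O(1/N)$ you need $\eta$ of order $\sqrt{(\log N)/N}$; but then, for sources at the edge of that window, the covering exponent shifts by $O(\eta)=O(\sqrt{(\log N)/N})$, which swamps your $\Theta((\log N)/N)$ rate margin. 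Either the escape probability stays bounded away from zero (and the raw-$\S$ escape costs you $\Theta(N)$ bits on average), or you enlarge the margin to $O(\sqrt{(\log N)/N})$ and recover only the fixed-length rate of Proposition~\ref{prop:bin-fix}. The binary normal/escape mechanism is too coarse: it does not let the description length track $L$.

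The paper's proof avoids this by following Pinkston's rejection technique rather than a fixed codebook. For each realization $\s$, one draws i.i.d.\ Bernoulli-$p$ candidates $\hat{\s}$ and accepts the first one whose empirical crossovers hit $(\eo,\ei)$ exactly; the feedback is the \emph{index of the accepted draw}. The index is geometric with parameter $q_l$ depending on the realized composition $l$, so its entropy is $-\log_2 q_l+O(1)$, and averaging over $l$ (Robbins' bound on the binomials) gives $\Exp[-\log_2 q_L]=Ni(\eo,\ei)+O(\log N)$ with no $\sqrt{N}$ term. The crucial difference is that this encoding is \emph{continuously} variable in $l$: sources with slightly more ones get slightly longer indices, and the $O(1/\sqrt{N})$ fluctuations in $l$ average out in the expected length rather than triggering a catastrophic escape. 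Your $\Delta$-shift idea and the Taylor expansion producing $q[H_2'(\eo)+H_2'(\ei)]$ are correct and are exactly what the paper uses in its last step; what is missing is a length-adaptive inner code (Pinkston's, or equivalently a type-header of $O(\log N)$ bits followed by a type-conditional codebook) in place of your two-mode escape.
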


The proof is provided in Appendix \ref{ap:prop:bin-var}.
We follow the technique given by Pinkston \cite{Pinkst67},
albeit with slight modifications.
The main difference lies in incorporating the average input 
power constraint and avoiding the use of partition
functions by exploiting the binary structure of the state and 
power loading vectors. 

\begin{figure}
\begin{center}
\includegraphics[width = 5in]{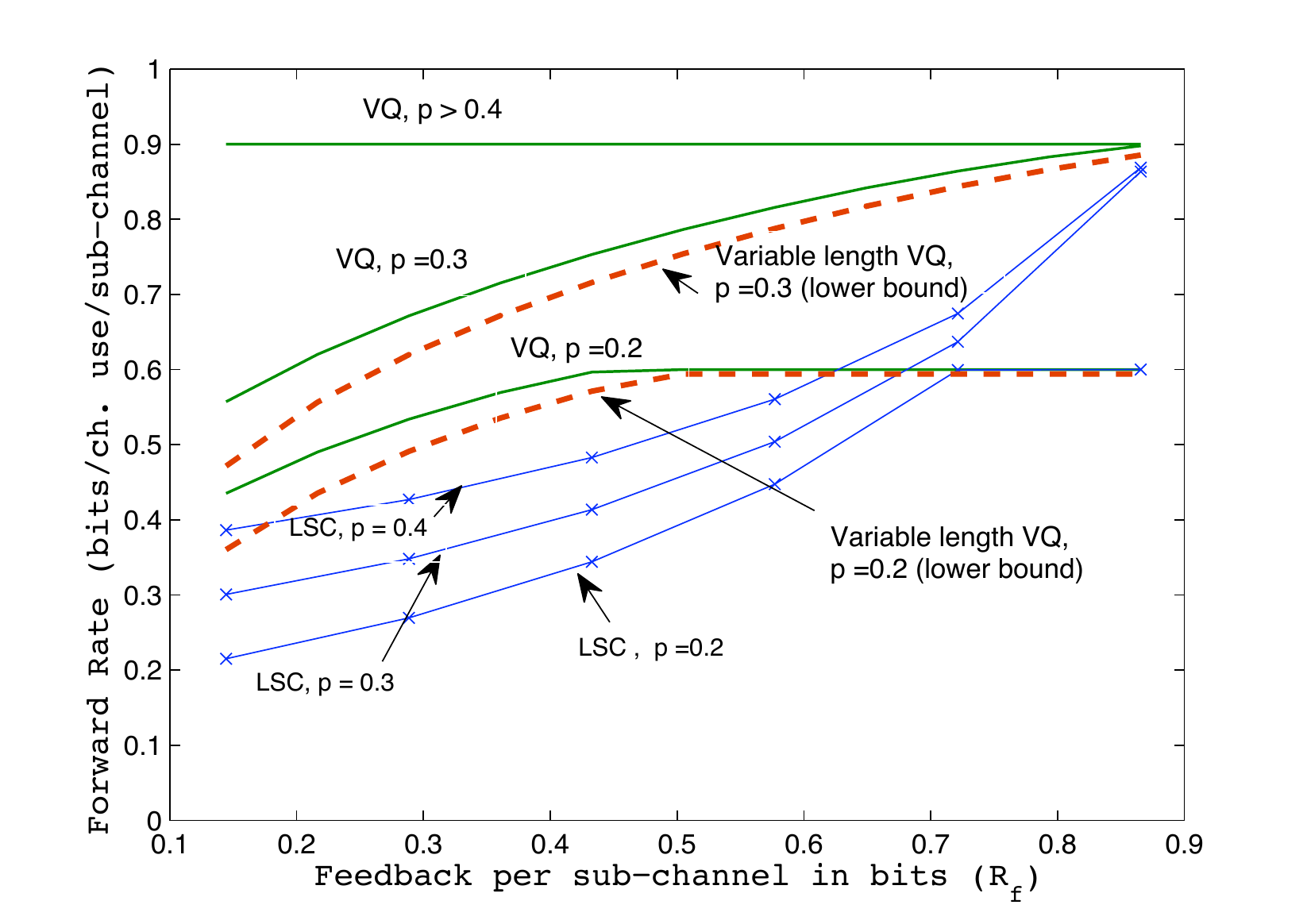}
\caption{Forward rate versus feedback rate for different input 
constraints corresponding to VQ and sub-optimal
feedback schemes using lossless source coding with channel state
reduction (curves labeled with ``LSC''). 
The sub-channels are assumed to be independent.
Other parameters are $q = 0.3$, $C_1 = 3$ and $C_0 =0$.
Also shown is the lower bound on forward rate corresponding to 
$N = 500$ and a variable-length feedback code (dotted lines).}
\label{fig:bincaps}
\end{center}
\end{figure}

The proposition says that in this scenario the forward rate
converges to the upper bound \eqref{eq:C} as $O\left((\log N)/N\right)$.
This is a substantial improvement over the $O\big(\sqrt{(\log N)/{N}}\big)$
convergence rate 
achieved by the fixed-length constant composition feedback 
codes (cf. \eqref{eq:cfix}). 
Although the result in Proposition \ref{prop:bin-var} is stated
for large $N$, a more refined lower bound is
derived in Appendix \ref{ap:prop:bin-var}, which holds
for any finite $N$. 
Fig.~\ref{fig:bincaps} plots a few instances of this lower bound 
against the upper bound \eqref{eq:C} as the feedback rate varies.
Clearly, the lower 
bound is fairly close to the optimal forward rate \eqref{eq:C}
and becomes tighter as the feedback rate increases.

\subsection{Practical CSF Codes}
\label{s:codes}

Until now, we have shown that for moderate to large $N$, the rate
distortion trade-off can be approached using random codes.  Such a
code, however, is not practical.
As aforementioned, the Lloyd algorithm can be used to design a
near-optimal vector quantizer for small number of sub-channels
(see \cite{LauLiu04IT, LauLiu04TC, LovHea05TVT, KscLau07IT} and
references therein).  Such a task becomes infeasible with tens or
hundreds of sub-channels, as is the case in many applications.

One practical solution in the case of a large number of sub-channels is
to use a graphical code similar to a 
low-density parity-check (LDPC) code.  Encoding and decoding 
of the source
(channel state vector) are respectively analogous to iterative decoding 
and encoding of a
graphical error-control code.
The complexity of such a code is
in general linear in the number of sub-channels.  For a discussion of
graphical codes for source coding, the reader is referred to
\cite{MarWai06DCC, Wainwr07SPM, CaiSha03Turbo}.
It is more challenging to design and implement variable-length codes.

\subsection{A Sub-optimal Scheme: Lossless Source Coding with Channel State Reduction}
\label{s:lossless}

For comparison, we also consider a feedback scheme using simple
channel state reduction and lossless source coding in lieu of vector
quantization (henceforth referred to as the ``LSC'' scheme for
convenience).
If the feedback rate is greater than the entropy rate of the channel
state vector, i.e., $R_f > H_2(q)$, then any lossless codes such as the
Huffman code basically suffice.
If the feedback rate is less than the entropy rate,
we consider a simple scheme which reports a fraction $f$ of good
sub-channels, where $f$ is chosen such that the entropy rate $H_2(fq)$
is basically $R_f$.
On average the transmitter is informed of $f q N$ good sub-channels.
The forward rate achieved with this option is 
\begin{equation} \label{eq:Cf}
  C_f = 
  \begin{cases}
    p\,C_1\ , & \textrm{if $ p \le fq$}\\
    fq C_1 + (p - fq)\, \frac{q(1-f) C_1+ (1-q)C_0}{q(1-f)+ (1-q)}\ , &
    \textrm{otherwise}.
  \end{cases}
\end{equation}
The expression follows directly from the observation that 
if fewer than fraction $p$
of the sub-channels are reported as good (that is, $fq < p$), the remaining 
$p-fq$ fraction of the sub-channels are chosen at 
random so that the probability of transmitting on a good
sub-channel is given by $(q-fq)/(1-fq)$.

Note that it might be more efficient for the receiver to inform the
transmitter to avoid a subset of bad sub-channels than to report a
subset of good sub-channels, depending on the parameters.  Suppose a
fraction $\bar{f}$ of the bad sub-channels are reported to the
transmitter where $H_2(\bar{f}(1-q)) = R_f$.  The forward rate
achieved with this option is given by
\begin{equation} \label{eq:Cfb}
  \bar{C}_f = 
  \begin{cases}
    p \frac{q C_1+ (1-\bar{f})(1-q)C_0}{q+ (1-\bar{f})(1-q)}\ ,
    & \textrm{if $p < q + (1-\bar{f})(1-q)$} \\
    q C_1 + (p - q)C_0\ , & \textrm{otherwise}.
  \end{cases}
\end{equation}
The maximum forward data rate achievable by the LSC scheme is
therefore $\max\{C_f, \bar{C}_f\}$.

\subsection{Numerical Results}
\label{s:2nr}

\begin{figure}
\begin{center}
\includegraphics[width = 5in]{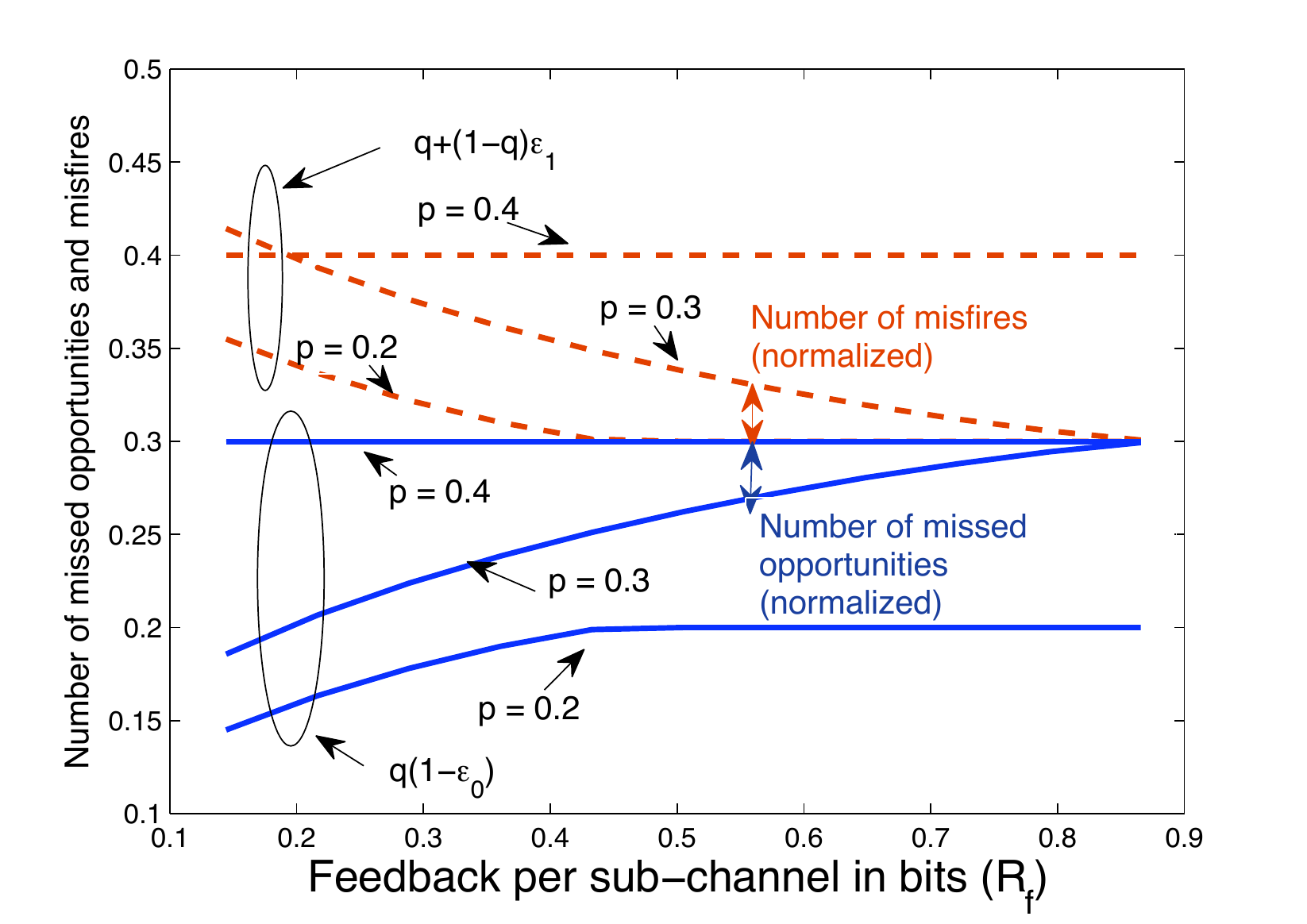}
\caption{The number of misfires and missed opportunities
(normalized by the total number of sub-channels $N$) versus the 
feedback rate for different  input constraints. 
The sub-channels are assumed to be independent.
Other parameters are $q = 0.3$, $C_1 = 3$ and $C_0 =0$.}
\label{fig:e0e1}
\end{center}
\end{figure}

We study the asymptotic performance of the optimal VQ scheme (given by
\eqref{eq:C}) and the sub-optimal LSC scheme (given by \eqref{eq:Cf}
and \eqref{eq:Cfb}).
Fig.~\ref{fig:bincaps} plots the forward rate per sub-channel versus
the feedback rate $R_f$ for different values of the input power
constraint $p$. 
The LSC scheme is clearly inferior compared to the asymptotic VQ
scheme with infinite as well as the variable-length VQ scheme at
$N=500$ sub-channels.
At small values of feedback, the VQ scheme gives substantial 
gains (up to $100\%$).
The asymptotic result is quite representative of the performance with
a relatively large number of sub-channels ($N=500$).
As expected, the forward rate increases with the feedback amount and
saturates at $R_f = H_2(q)$, at which point all good sub-channels
can be reported at no loss.



The gain achieved
with VQ can be better understood by studying the 
corresponding numbers of missed opportunities and 
misfires shown in Fig.~\ref{fig:e0e1}. Intuitively, the larger the 
values of $\eo$ and $\ei$, the more ``distortion" or ``errors" 
we allow in the rate-distortion feedback code and hence 
it will require smaller amount of feedback. This is 
clearly reflected in \eqref{eq:rrfin}.  Consider the
case of $p =0.4$ $(> q)$ in Fig.~\ref{fig:e0e1}, $\eo =0$ and 
the fraction of misfires $(1-q)\ei = 0.1$,  which implies that we allow
enough misfires so that the required 
feedback rate is kept small.
In contrast, with $p = 0.4$ we 
never report a bad sub-channel as
good for the LSC scheme, and thereby incur extra
feedback overhead. 
The reverse holds for $p =0.2$ $(< q) $, where the optimal scheme allows
enough missed opportunities ($\eo$ is large and $\ei$ is small)
so that the required feedback rate is again small. Since
in this case $\ei \approx 0$, bad channels are not reported as
good.

\section{Correlated Two-State Sub-channels}
\label{s:2corr}

In multicarrier systems, the states of the sub-channels are often
correlated.  Consider the same system as in Section \ref{s:2} 
except that the binary (good/bad) channel
states of the $N$ sub-channels, $S_1,S_2,\dots,S_N$ form a stationary
Markov chain.
The optimal feedback scheme is nonetheless a vector quantization
problem, with its asymptotic performance characterized by the
following rate distortion result.

\begin{theorem}\label{th:rdc}
  Given a stationary binary Markov source $\{S_i\}$, a weight constraint on
  every binary reconstruction $w(\hat{\sss}) \leq p$, and the
  single-letter distortion measure $d(s,\hat{s}) = 1_{s>\hat{s}}$, the
  rate distortion function
  is given by
  \begin{equation} \label{eq:rdc}
    R(D) = \limsup_{N\to \infty} 
    \min_{ P_{\hat{\S}|\S}: \, \substack{
    \sum 
    \Prob\{\hat{S}_i=1\} \leq pN \\
    \sum 
    \Exp [d(S_i,\hat{S}_i)] \le DN}} \frac1N\, I(\S; \hat{\S})
  \end{equation}     
\end{theorem}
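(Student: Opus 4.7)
The plan is to lift the single-letter argument for Theorem \ref{th:rd} to length-$N$ super-letters, since the Markov source lacks memoryless structure at the symbol level but, for any fixed $N$, the non-overlapping length-$N$ blocks form an identically distributed process whose dependence across block boundaries contributes only an $O(1)$ entropy defect per block. The $\limsup_{N\to\infty}$ in \eqref{eq:rdc} precisely absorbs this defect.

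For achievability, I would fix large $N$, pick a kernel $P_{\hat{\S}|\S}$ on $\{0,1\}^N$ nearly attaining the minimum in \eqref{eq:rdc} with slack in the weight and distortion constraints, and then apply Shannon's random coding at the super-letter level. A codebook of $2^{mNR}$ super-words is generated by drawing $m$ i.i.d.\ length-$N$ codewords from the marginal $P_{\hat{\S}}$. The encoder uses joint typicality between length-$mN$ source and reconstruction sequences; stationarity (assuming irreducibility) of the Markov source furnishes the AEP and concentration of empirical distortion and weight. Codewords violating the per-symbol weight bound are purged at negligible cost via a Chernoff argument, exactly as in the i.i.d.\ case of Theorem \ref{th:rd}.

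For the converse, let $\S^{(1)},\dots,\S^{(m)}$ denote $m$ consecutive length-$N$ super-blocks and let $\hat{\S}^{(k)}$ be the corresponding reconstructions. Any rate-$R$ code satisfying the averaged weight and distortion constraints gives
\begin{align*}
mNR &\geq I\bigl(\S^{(1)},\ldots,\S^{(m)};\,\hat{\S}^{(1)},\ldots,\hat{\S}^{(m)}\bigr) \\
    &\geq \sum_{k=1}^m I\bigl(\S^{(k)};\,\hat{\S}^{(k)}\bigr) - (m-1)\bigl(H(S_1) - h\bigr),
\end{align*}
where $h$ is the per-symbol Markov entropy rate, and the $O(m)$ correction comes from the identity $H(\S^{mN}) = \sum_k H(\S^{(k)}) - (m-1)(H(S_1)-h)$, an immediate consequence of the chain rule and the Markov property together with the bound $H(\S^{mN}|\hat{\S}^{mN}) \le \sum_k H(\S^{(k)}|\hat{\S}^{(k)})$. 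Stationarity makes all $P_{\S^{(k)}}$ identical, so convexity of mutual information in the kernel lets me replace the $m$ per-block kernels by their average $\bar P_{\hat{\S}|\S}$; this averaged kernel inherits the averaged weight and distortion constraints, and its mutual information lower bounds the per-block average. Dividing by $mN$, letting $m \to \infty$, and taking $\limsup_{N\to\infty}$ of $R \geq \tfrac{1}{N}\min I(\S;\hat{\S}) - (H(S_1)-h)/N$ then yields \eqref{eq:rdc}.

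The hard part is controlling the cross-block correlation in the converse: unlike the i.i.d.\ source of Theorem \ref{th:rd}, $H(\S^{mN})$ is strictly smaller than $\sum_k H(\S^{(k)})$, and the entropy deficit $(m-1)(H(S_1)-h)$ must be shown negligible after normalization. Because $H(S_1)-h$ is an $N$-independent constant bounded by $1$, dividing by $mN$ and sending first $m\to\infty$ then $N\to\infty$ does absorb it; this is precisely why the rate-distortion function is written in multi-letter form with a $\limsup$ rather than as a single-letter expression like \eqref{eq:RD}.
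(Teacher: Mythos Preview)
The paper actually omits the proof, stating only that it follows from the techniques in \cite{Berger71} for stationary sources together with the weight-constraint handling of Theorem~\ref{th:rd}; your super-letter random coding for achievability and your block-decomposed converse with the explicit Markov entropy defect $(m-1)(H(S_1)-h)$ are precisely that standard program, so the approaches coincide. The only technical patch needed is that your converse tacitly assumes blocklength $mN$: for an arbitrary operational blocklength $L$, write $L=mN+r$ with $0\le r<N$ and note that the residual $r$ symbols contribute at most $N$ bits, which is $o(1)$ after normalizing by $L$ and sending $m\to\infty$.
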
 

It is straightforward to prove Theorem \ref{th:rdc} using the same
techniques as developed in \cite{Berger71}, with the additional
weight constraint for the reconstruction. Hence the proof is
omitted. Random codes achieve the rate distortion
function. However, in practice, graphical codes can be designed to
approach the optimal trade-off. In addition, Theorem \ref{th:rdc}
continues to hold if the instantaneous input constraint 
$w(\hat{\s}) \le p$ is replaced by an average input constraint 
$\Exp[w(\hat{\S})] \le p$.

Note that \eqref{eq:rdc} involves minimization over the conditional
distribution of the entire power loading vector, and hence is not a
single-letter characterization of the rate distortion function.
Calculating the rate distortion function for correlated
sources is a hard problem in general. The solution is known only in a few
special cases pertaining to source alphabets, correlation models and
distortion measures \cite{Berger71}.  Even for a symmetric binary Markov
chain and Hamming distortion, the rate distortion function is exactly
known only for very small distortion values \cite{Gray70IT}.
In the CSF problem, the reconstruction $\hat{\S}$ is a binary hidden
Markov process. There is no known close-form expression for
the entropy rate of such processes,
although there exist approximations and numerical results in some
cases (see, e.g., \cite{OrdWei04ITW, LuoGuo09IT}
and references therein).  

Since the exact solution to the optimization problem \eqref{eq:rdc} is
difficult, we will next find upper and
lower bounds on the rate for a given distortion.  Due to the
stationarity of the source, the optimal conditional probability
$P_{\hat{\S}|\S}$ is also stationary.  Consider any stationary
process $\{(S_i,\hat{S}_i)\,|\,i=0,\pm1,\dots\}$ which satisfies the
constraints in \eqref{eq:rdc}, i.e., $\Prob\{\hat{S}_i=1\} \le p$ and
$\Exp [d(S_i,\hat{S}_i)] \le D$.
Then
\begin{align}
  \liminfty{N} \frac1N I(\S;\hat{\S})
  &= \liminfty{N} \frac1N \sum^{N-1}_{i=1} I(S_i;\hat{\S}|\S^{i-1}_0) \\
  &= H(S_1|S_0) - \liminfty{N} \frac1N \sum^{N-1}_{i=1} H(S_i|\hat{\S},
  \S^{i-1}_0) \label{eq:NHS} \\
  &\ge H(S_1|S_0) - H(S_1|\hat{S}_1,S_0) \label{eq:HSi} \\
  &= I(S_1;\hat{S}_1|S_0)
\end{align}
where \eqref{eq:HSi} is because of stationarity and because conditioning
decreases the entropy.
The rate distortion function can thus be lower bounded as 
\begin{align}
  R(D) \ge \min_{P_{\hat{S}_1|S_1,S_0}} I(S_1;\hat{S}_1|S_0)
\end{align}
with $P_{\hat{S}_1|S_1,S_0}$ satisfying the constraints in
\eqref{eq:rdc}.  The bounding mutual information depends only on the
following four probabilities for the given source:
$q_{s_0s_1} =
  P_{\hat{S}_1,S_0,S_1}(0,s_0,s_1)$ with $s_0,s_1=0 \text{ or } 1$.
Denote the lower bound by 
$i_l(q_{00},q_{01},q_{10},q_{11}) = I(S_1;\hat{S}_1|S_0)$.
This bound can be expressed as a function of 
the crossover probabilities denoted by 
$\delta_{01} = \Probk{S_i =1|S_{i\pm1} =0}$ and
$\delta_{10} = \Probk{S_i =0|S_{i\pm1} =1}$ for all $i$.
(The probability of a sub-channel being good
is then $q=\delta_{01} /(\delta_{01} + \delta_{10})$)
An explicit expression for the lower bound is derived in 
Appendix \ref{ap:corr-rd}.

In terms of these joint probabilities, the fraction of 
sub-channels that are good, and are correctly 
reported as good is given by $(q-q_{01} - q_{11})$
and the total fraction of sub-channels
reported as good is given by $1- (q_{00} + q_{01} + q_{10} + q_{11})$.
Consequently, an upper bound on the forward achievable rate 
can be
obtained as the solution to the following optimization problem:
\begin{subequations} \label{eq:opts}
  \begin{align}
    \text{maximize:} & \quad
    (C_1-C_0)(q-q_{01} - q_{11}) + pC_0 \\
    \text{subject to:}
    &\quad 1- (q_{00} + q_{01} + q_{10} + q_{11}) \le p \\
    &\quad i_l(q_{00},q_{01},q_{10},q_{11}) \le R_f \\
    &\quad 0 \le q_{00}, q_{01}, q_{10}, q_{11} \le 1
  \end{align}
\end{subequations}
which can be easily solved numerically.

In order to find an upper bound on the rate distortion function, we
restrict the minimization over $P_{\hat{\S}|\S}$ in
\eqref{eq:rdc} to be a minimization over a finite-dimensional
distribution.  For example, suppose that conditioned on $S_i$ and
$S_{i+1}$, the random variable $\hat{S}_i$ is independent of all the
remaining random variables in $(\S,\hat{\S})$.  By stationarity and the
Markovian property, the joint distribution of $(\S,\hat{\S})$ is
determined by the conditional distribution $P_{\hat{S}_0|S_0,S_1}$.
Then it can be shown that, conditioned on $S_{i-1}, S_{i+1},
\hat{S}_{i-1}$ and $\hat{S}_i$, the variable $S_i$ is also independent
of all the remaining random variables in $\S$ and $\hat{\S}$.
Consequently,
\begin{align}
  H(S_i|\hat{\S},\S_0^{i-1})
  &\ge H(S_i|\hat{\S},\S_0^{i-1},\S_{i+1}^N) \\
  &= H(S_i|S_{i-1},S_{i+1},\hat{S}_{i-1},\hat{S}_i).
\end{align}
Substituting in \eqref{eq:NHS},
an upper bound for the rate distortion function is obtained
as the solution to the following optimization problem:
\begin{align} \label{eq:ub1}
  R(D) \le \min_{ P_{\hat{S}_0|S_0,S_1} }
  I(S_1;S_2,\hat{S}_0,\hat{S}_1|S_0)
\end{align}
where $P_{\hat{S}_0|S_0,S_1}$ is also subject to the constraints in
\eqref{eq:rdc}.  A simpler but looser bound is obtained if we assume that
$\hat{S}_i$ is independent of everything else conditioned on $S_i$,
for which the mutual information in \eqref{eq:ub1} becomes
 $I(S_1;S_2,\hat{S}_1|S_0)$, and the minimization is over
$P_{\hat{S}_0|S_0}$. Again, let the crossover probabilities be 
given by \eqref{eq:e01}. The upper bound is a function of 
these two crossover probabilities and is denoted by 
$i_u(\eo, \ei) = I(S_1;S_2,\hat{S}_1|S_0)$. An explicit expression
 is derived in Appendix \ref{ap:corr-rd}. 
Consequently, a lower bound on the forward achievable rate 
can be obtained by solving an optimization problem similar to
that in Proposition \ref{prop:RDopt} with the constraint 
\eqref{eq:sp-fb} replaced by $i_u(\eo, \ei) \le R_f$.

It is easily seen that if no reconstruction errors are 
allowed, then both the upper and lower
bounds reduce to the entropy rate of the channel 
state process. In other word, if $H(\hat{\S}|\S) = 0$, or 
equivalently, $\eo= \ei =0$, $q_{10} = q_{00} =1$ and 
$q_{01} = q_{11} =0$, then $i_l(1,0,1,0) = i_u(0,0) = H(S_1|S_0)$.
Next we provide an example
in which the upper bound is not tight. 
Choosing $p=1$ implies that the power
loading vector $\hat{\S}$ can be chosen all ones and that achieves 
the capacity with zero feedback rate. 
However, equivalently choosing $\eo = 0$ and $\ei =1$
gives the upper bound on required feedback rate $i_u(0,1) > 0$. 

Fig.~\ref{fig:rrublb} plots the upper and lower bounds 
on achievable forward rate per sub-channel versus $R_f$ for different values of
$q$ with $\dio = 0.3$ and $p = 0.3$. Consider the
plot for $q =0.3$.  There is a
substantial gap between the two bounds for small
feedback rates. However, as the feedback rate increases, the
gap closes and the bounds provide an accurate measure of 
the performance of the VQ scheme with 
correlated sub-channels.
Also shown is the performance of the VQ scheme with independent
sub-channels. Clearly, correlation improves the
forward rate by decreasing the feedback requirement. 

Later in Section \ref{s:raycorr}, the preceding methodology will be 
utilized to derive bounds on the performance of 
VQ schemes for correlated Rayleigh fading 
sub-channels.
\begin{figure}
\begin{center}
\includegraphics[width = 5in]{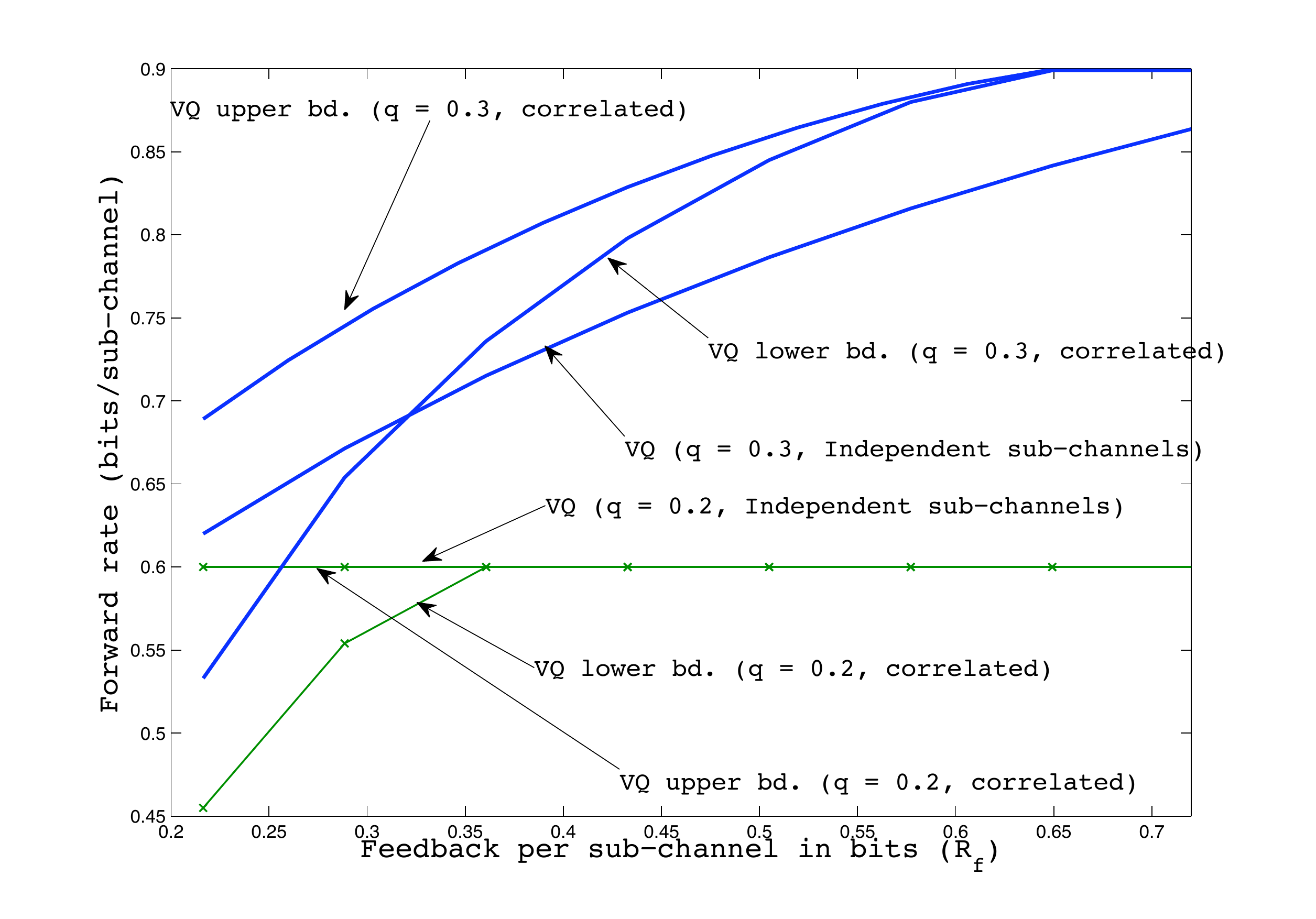}
\caption{Upper and lower bounds on the forward rate with VQ 
versus the feedback rate
for correlated sub-channels. Parameters are  
$\delta_{10}= 0.3, p = 0.3, C_1 =3$ and $C_0 =0$.}
\label{fig:rrublb}
\end{center}
\end{figure}

\section{Independent Rayleigh fading sub-channels}
\label{s:ray}

The design of a limited-rate CSF strategy for
Rayleigh fading sub-channels is again a VQ problem.
Unfortunately, the exact distortion measure, which corresponds to the
capacity-maximizing power loading vectors is difficult to work with
\cite{LauLiu04IT, LauLiu04TC, LovHea05TVT, KscLau07IT}.
In order to simplify the problem, we focus on 
threshold-based schemes, which converts the sequence of 
 Rayleigh fading sub-channels to a sequence of 
``good" (sub-channel gain above the threshold) and ``bad"
(sub-channel gain below the threshold) sub-channels.
This enables the use of the limited feedback schemes developed for 
two-state sub-channels.
It will be shown that, as the number of sub-channels $N \to \infty$, 
the rate achieved with such
a scheme grows at the same rate as that of water-filling
with full channel state information at the transmitter.

The limited feedback problem here differs from the case of two-state 
 sub-channels studied in Sections \ref{s:2} and \ref{s:2corr}
in two key aspects.
First, the threshold which determines the
fraction of sub-channels that are considered good needs to be optimized.
Second, given the total power, the fraction of sub-channels to activate
also influence the amount of powers in each active sub-channel.
 
\subsection{System Model}
\label{ray:mod}
Consider a multicarrier channel with $N$ independent and 
statistically identical,
where the channel output for the $i$-th sub-channel
is written as
\begin{equation}
  Y_i = H_i X_i + Z_i
\end{equation}
where $H_i$ and $Z_i$ are zero-mean circularly 
symmetric complex Gaussian (CSCG) random
variables. 
Without loss of generality, we assume that 
the channel and noise variance is one, that is, 
$E[|H_i|^2] = E[|Z_i|^2] = 1$. Also, the noise is assumed to
be independent across the sub-channels. 
The $N \times 1$ input vector $\X = [X_1, X_2, \ldots, X_N]^\dag$ satisfies 
the average total signal-to-noise ratio (SNR) constraint
$E \left[ \X^\dag \X\right] \le P$.
The channel vector $\H = [H_1, H_2, \ldots, H_N]^\dag$ is assumed
to be known perfectly at the receiver. 
We assume a block fading
model so that $\H$ remains constant for $T$ channel uses and then
changes to an independent value. 
The time dependence is suppressed to simplify notation.

\subsection{Optimal Threshold Based VQ}
\label{ray:sp}

The gain for the $i$-th  sub-channel, $|H_i|^2$, is 
exponentially distributed with its mean equal to one.
Given a threshold $t \ge 0$, define the $N \times 1$ binary state 
vector $\S$ so that the $i$-th
entry $S_i = 1$ if $|H_i|^2 \ge t$, and $S_i =0$ otherwise.
The probability of a sub-channel being ``good" is denoted as
$q = \Probk{S_i =1} = \Probk{ |H_i|^2 > t } = e^{-t}$.

Suppose that, on average, the transmitter transmits over or, \emph{activates} 
a fraction $p$  of the sub-channels. The power is distributed
uniformly over the active sub-channels so that each
transmission occurs with SNR equal to $P/(Np)$. 
Therefore, the expected capacity of a good and bad sub-channel, respectively, is
given by
\begin{equation} \label{eq:ray-cg}
  C_1 = \frac1q \int_{t}^{\infty}  e^{-\tau} \log\left(1 +
    \frac{P\tau}{Np}\right) \intd\tau
\end{equation}
and
\begin{equation} \label{eq:ray-cb}
  C_0 = \frac{1}{1-q} \int_{0}^{t} e^{-\tau} \log
  \left(1 + \frac{P\tau}{Np}\right) \intd\tau.
\end{equation}   

Assume that on average $R_f$ 
bits per sub-channel per coherence block are available for error-free CSF. 
Also, define $B = N R_f$ as the average amount of feedback
summed across all sub-channels.
Similar to the
case of two-state sub-channels, the power loading with 
limited CSF can be seen as a mapping from the space of channels
state vector $\S$ to the space of power loading vectors $\hat{\S}$,
where $\hat{S}_i = 1$ if the $i$-th sub-channel
is activated and $\hat{S}_i = 0$ otherwise. 


We note a key difference between the VQ problem at hand and the usual
stationarity assumption in rate distortion theory: The optimal choice
of the threshold $t$ here may vary with the total number of
sub-channels, hence so does the statistics of the binary source
denoted by probability $q$.
Nonetheless, the following asymptotic bound on achievable distortion
(equivalently, forward rates) can be established.
   
\begin{proposition}\label{prop:RDopt-Ray}
Given $N$ parallel Rayleigh fading sub-channels and 
an average of $R_f$ bits of feedback per sub-channel per coherence 
block, the following statements hold.
\begin{enumerate}
\renewcommand{\labelenumi}{\alph{enumi})}
\item 
The forward rate per sub-channel achieved with 
a threshold-based feedback scheme is upper bounded by
$C$, the maximized objective in the following optimization problem:
  \begin{subequations} \label{eq:opt-ray}
  \begin{align}
    \text{maximize:} & \quad C = q(1-\epsilon_0) (C_1-C_0) + p C_0 \\
    \text{subject to:} 
    &\quad H_2(p) - q H_2(\epsilon_0) - (1-q) H_2(\epsilon_1) \le R_f 
    \label{eq:sp-fb1} \\
    & \quad q(1-\epsilon_0) + (1-q)\epsilon_1 = p 
    \label{eq:qep1} \\
    & \quad 0 \le \epsilon_0, \epsilon_1 \le 1
  \end{align}
  \end{subequations}
with $q = e^{-t}$, and where the maximization is over $p, t$ and $\eo$.
\item  \label{i:b} 
There exist fixed-length constant composition feedback codes
  which achieve the forward rate per sub-channel given by
  \eqref{eq:cfix} with sufficiently large $N$, where $C$ and $q$ are
  obtained from solving the optimization problem in part~(a).

\item There exist variable-length feedback codes which achieve the
  forward rate per sub-channel given by \eqref{eq:cap-var} with
  sufficiently large $N$, where $C$, $q$, $\eo$ and $\ei$ are obtained
  by solving the optimization problem in part~(a).
\end{enumerate}
\end{proposition}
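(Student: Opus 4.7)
The plan is to reduce the Rayleigh problem to the two-state setting of Section~\ref{s:2} via the threshold operation, then invoke Theorem~\ref{th:rd} and Propositions~\ref{prop:bin-fix} and~\ref{prop:bin-var} on the reduced binary source. Fixing any threshold $t$, the induced state vector $\S$ is i.i.d.\ Bernoulli$(q)$ with $q = e^{-t}$, and because the total power is spread uniformly over the activated sub-channels, the expected capacity of any active sub-channel is exactly $C_1$ when $S_i=1$ and $C_0$ when $S_i=0$, as given in \eqref{eq:ray-cg}--\eqref{eq:ray-cb}. Thus for any feedback code the expected forward rate per sub-channel equals $q(1-\eo)(C_1-C_0) + pC_0$, where $\eo$ is the average fraction of good sub-channels that go unactivated and $\ei$ is the corresponding misfire probability.

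For part (a), I would fix $(p,t)$ and apply the converse of Theorem~\ref{th:rd}, which holds verbatim for every finite $N$, to the reduced binary source. This yields the mutual-information inequality \eqref{eq:sp-fb1} on the crossover probabilities $(\eo,\ei)$ induced by the feedback code. The weight constraint $q(1-\eo) + (1-q)\ei = p$ must be tight at the optimum because misfires are not penalized, giving \eqref{eq:qep1}. Minimizing $\eo$ subject to these constraints, and then maximizing the resulting forward rate over the free parameters $p$ and $t$ (equivalently $q$), gives the upper bound $C$ of \eqref{eq:opt-ray}.

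For parts (b) and (c), I would take the optimizers $(p^*,t^*,\eo^*,\ei^*)$ from part~(a) and build the feedback code on the binary state vector induced by the threshold $t^*$. Under this reduction the source is exactly i.i.d.\ Bernoulli$(q^*)$ with $q^* = e^{-t^*}$, so the hypotheses of Propositions~\ref{prop:bin-fix} and~\ref{prop:bin-var} are satisfied verbatim. Direct application of those propositions supplies fixed-length constant-composition and variable-length random codes, respectively, that achieve the lower bounds \eqref{eq:cfix} and \eqref{eq:cap-var}, where the leading factor multiplies exactly the target rate $C$ defined in \eqref{eq:opt-ray}.

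The main obstacle is the subtlety flagged in the paragraph preceding the statement: the optimal threshold $t^*$ and hence $q^* = e^{-t^*}$ may scale with $N$, so the binary source is not truly stationary along the $N$-sequence. This is why both achievability claims are stated ``for sufficiently large $N$.'' To make the argument rigorous I would inspect the proofs in Appendices~\ref{ap:prop:bin-fix}--\ref{ap:prop:bin-var} and track how $N_o$, $N_1$ and the leading constants depend on $(q,\eo,\ei)$: these enter only through $Nq$ and $H_2'(\eo)+H_2'(\ei)$, so as long as $Nq^* \to \infty$ and the optimal $(\eo^*,\ei^*)$ remain bounded away from $\{0,\tfrac12,1\}$, the bounds \eqref{eq:cfix}--\eqref{eq:cap-var} remain valid with the $N$-dependent optimizers plugged in. Verifying this uniformity reduces to a monotonicity argument on the optimizer of \eqref{eq:opt-ray} using the explicit form of $C_1,C_0$ in \eqref{eq:ray-cg}--\eqref{eq:ray-cb}, which is the only genuinely new ingredient beyond what has already been proved for the two-state model.
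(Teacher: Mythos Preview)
Your proposal is correct and matches the paper's approach: the paper states that part~(a) follows from the converse argument of Proposition~\ref{prop:RDopt} (equivalently Theorem~\ref{th:rd}) applied to the threshold-reduced binary source, and that parts~(b) and~(c) are proved exactly as Propositions~\ref{prop:bin-fix} and~\ref{prop:bin-var}, with the proofs omitted. Your additional discussion of the $N$-dependence of $q^*$ and the uniformity of the bounds in Appendices~\ref{ap:prop:bin-fix}--\ref{ap:prop:bin-var} is more careful than what the paper actually writes down, but it is not a different route---it simply fills in a detail the paper leaves implicit.
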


Part (a) in Proposition~\ref{prop:RDopt-Ray}
follows directly from the Fano's inequality and is
similar to that of Proposition~\ref{prop:RDopt}.
Note that the upper bound $C$ holds for any finite $N$.
The lower bounds in parts (b) and (c) can be proved
similarly as Propositions~\ref{prop:bin-fix}
and~\ref{prop:bin-var}, respectively,
corresponding to two-state sub-channels. 
The proofs are omitted.
 Although the lower bounds are stated for 
large $N$, more accurate expressions that apply to any 
finite $N$ are derived in 
Appendices \ref{ap:prop:bin-fix} and \ref{ap:prop:bin-var}.

\begin{figure}
\begin{center}
\includegraphics[width = 5in]{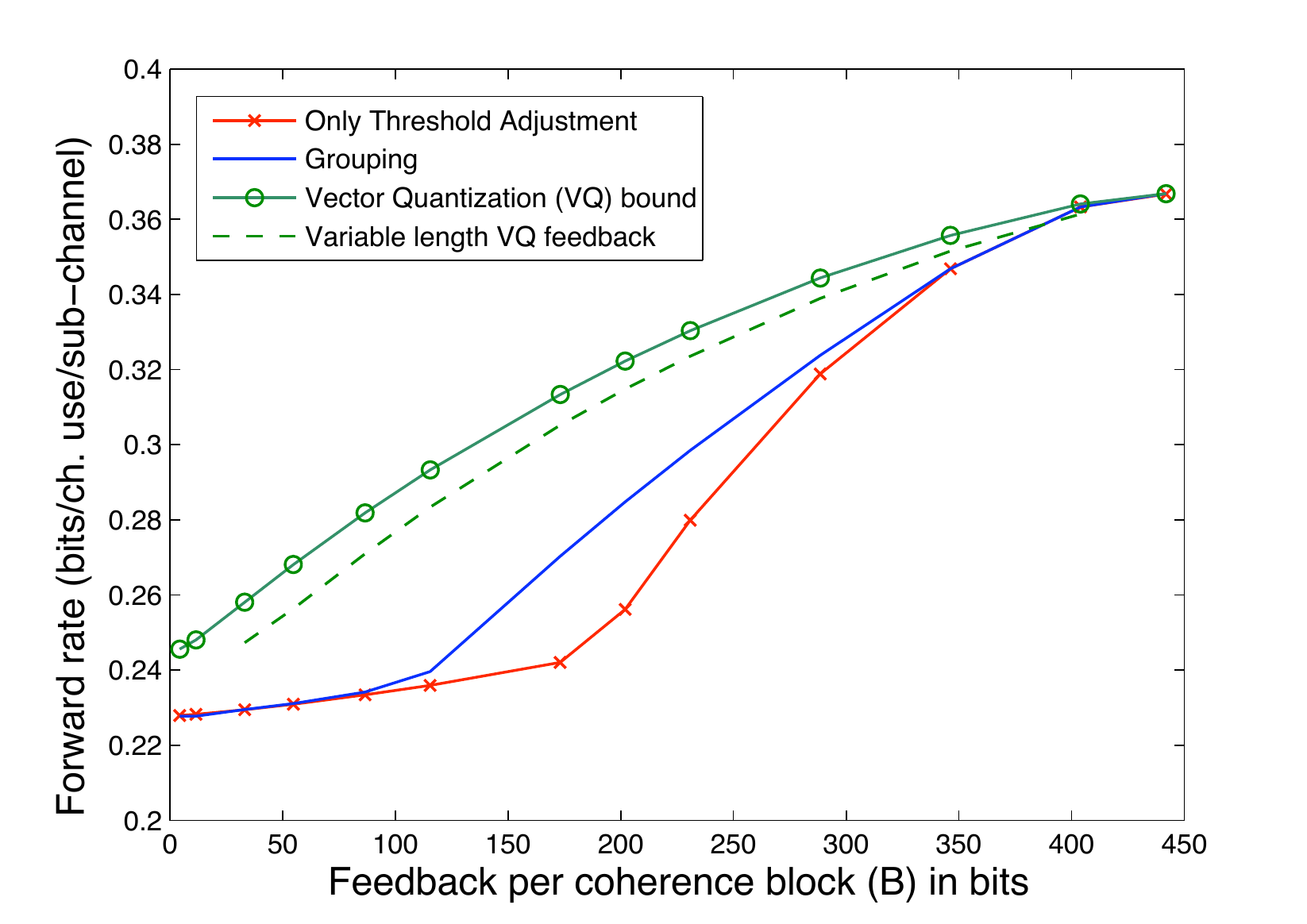}
\caption{Forward rate versus feedback rate for different feedback
  schemes for $N=500$ sub-channels at SNR $P=20$ dB.  For comparison,
  the water-filling capacity with full channel state information at
  receiver is 0.385
  bits per sub-channel use.}
\label{fig:iidraycaps20}
\end{center}
\end{figure}

\begin{figure}
\begin{center}
\includegraphics[width = 5in]{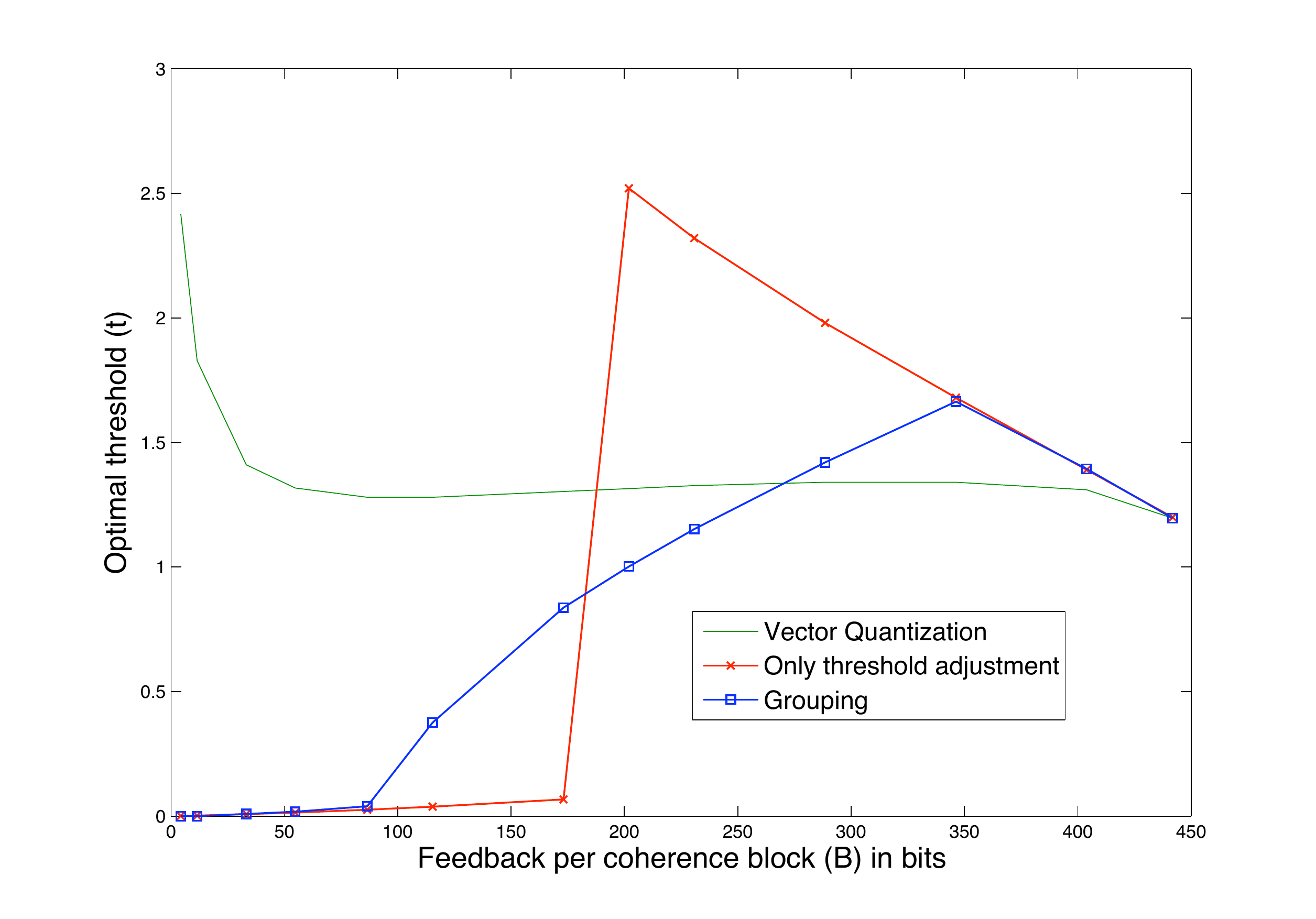}
\caption{Optimal threshold versus feedback for different feedback schemes
with SNR = 20 dB and N = 500.}
\label{fig:iidthres}
\end{center}
\end{figure}  
 
\begin{figure}
\begin{center}
\includegraphics[width = 5in]{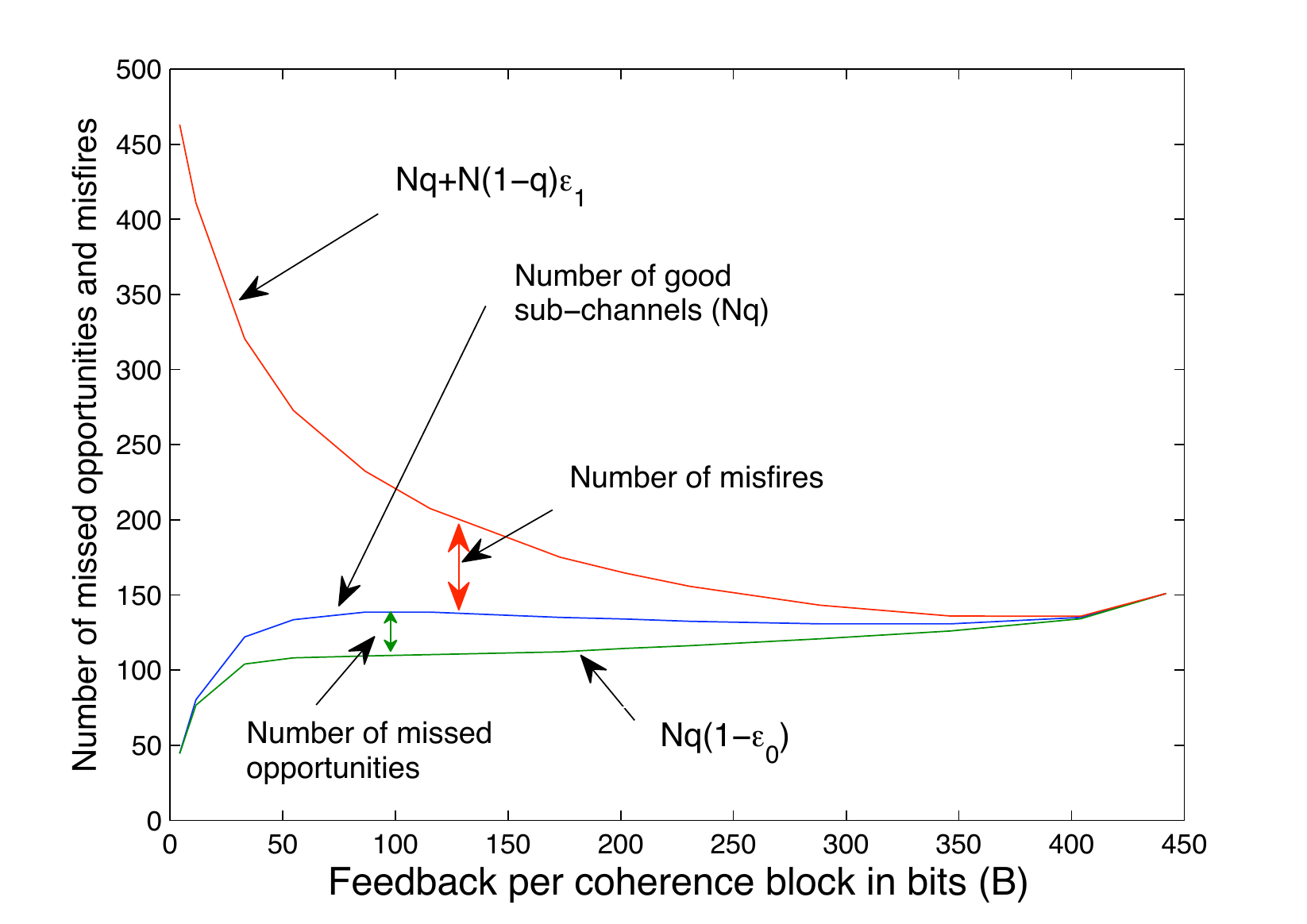}
\caption{Optimal number of sub-channels that 
exceed the threshold together with the optimal number 
of missed opportunities and misfires corresponding to the
VQ feedback scheme for independent 
Rayleigh fading sub-channels. Other parameters are 
 SNR = 20 dB and N = 500.}
\label{fig:map-ray20}
\end{center}
\end{figure}

Interestingly, \eqref{eq:cfix} and \eqref{eq:cap-var} imply
that the rate at which the lower bounds approach 
$C$ depends on the average number of good sub-channels 
$Nq$, which can be much smaller than the total 
number of sub-channels $N$.  
The upper and lower bounds on forward rate versus total 
feedback $B = N R_f$ per coherence block are shown in 
Fig.~\ref{fig:iidraycaps20} for SNR $P = 20$ dB.\footnote{The 
forward rate is measured in bits per sub-channel per channel use whereas 
$B$ is the total number of feedback bits per coherence block. 
Since typically the
coherence block is several hundred channel uses, the
results in Fig.~\ref{fig:iidraycaps20} correspond to
the practical regime in which the feedback rate
is much smaller than the forward data rate.}'\footnote{Numerical 
results for a SNR of 27 dB and $N = 500$ (curves not shown here) show that 
from the case of no feedback to the case of
full feedback (water-filling power allocation) the change in
capacity is merely 16\%. This is consistent with the 
understanding that adaptive power allocation does not
help much at high SNRs. Of course, the gains from the various 
power allocation schemes presented here increase 
with decreasing SNR.}
Only the lower bound corresponding to a
variable-length variable composition feedback code is shown. 
Unless specified otherwise, here and in the subsequent numerical 
results we let $N = 500$. 
The plots show that the upper and lower bounds are
quite close (within 10\%). 
The corresponding optimal threshold is shown in 
Fig.~\ref{fig:iidthres}. The number of {\em good} sub-channels,
number of missed opportunities and the number of 
misfires as a function of amount of feedback 
are shown in Fig.~\ref{fig:map-ray20}. Interestingly, 
for $B \ge 70$ bits, the threshold does not 
change significantly but the number of missed opportunities and
misfires adapt 
to accommodate additional feedback.  
Also shown in Fig.~\ref{fig:iidraycaps20} are curves 
corresponding to other simpler 
sub-optimal feedback schemes to be described in 
subsequent sections. 



\subsection{Lossless Coding of Feedback with Threshold Adjustment}
\label{ray:coin}

In this section, we consider an alternative feedback scheme using
lossless source coding of reduced channel states.
As in Section \ref{ray:sp}, a threshold $t$ is used to control the
fraction of sub-channels qualifying as good (or bad) in order to meet
the limited feedback constraint.
All 
good (or bad) sub-channels are then reported to the 
transmitter using entropy coding. 
The feedback per coherence block required by this 
scheme is essentially $N H_2(q)$ bits
with $q = e^{-t}$. 
Note that the feedback constraint 
$H_2(q) \le R_f$ can be met by choosing either $q \le 1/2$ or
$q > 1/2$.
The optimal choice corresponds to the one 
that maximizes the forward rate
\begin{equation}\label{eq:ray-coin-cap}
  C(t) = N \int_t^{\infty} e^{-\tau} \log \left(1 + \frac{P\tau}{N q }\right) \intd\tau.
\end{equation}

Fig.~\ref{fig:iidraycaps20} plots the 
forward rate achieved with this scheme versus the feedback 
per coherence block at 20 dB.
For small to moderate feedback, this scheme
performs worse than the optimal VQ scheme
described in the previous section. For higher feedback 
rates, the performance of the two schemes converge.
The optimal threshold versus $B$ for 20 dB is 
shown in Fig.~\ref{fig:iidthres}. 
For small amounts of feedback 
the optimal threshold $t$ is close to zero. 
Furthermore, Figs.~\ref{fig:iidraycaps20} and 
 \ref{fig:iidthres} show that once the 
feedback crosses a certain threshold ($B \approx 170$ bits here), 
the optimal 
threshold decreases and the capacity increases with the amount of 
feedback since more good sub-channels can be reported 
with additional feedback. As $B$ increases further, the
threshold decreases monotonically to an asymptotic value, and 
the capacity reaches its maximum value at around
 $B = B_{max} \approx 440$ bits 
per coherence block. More 
feedback beyond this value cannot be utilized by the threshold-based
scheme.\footnote{The additional 
bits could be used to increase the number of
quantization levels for the power on active sub-channels.
The corresponding increase in rate, relative
to the one-bit quantization assumed here, is typically
quite small \cite{SunHon08IT}.} 

The asymptotic forward rate versus feedback performance 
of this scheme (assuming that the 
amount of feedback and number of sub-channels go to 
infinity) is discussed in \cite{SunHon08IT} and hence
the details are omitted here. A more 
refined analysis 
characterizing the asymptotic growth rate of the maximum
achievable forward rate and $B_{max}$  
as a function of $N$ is given in Section \ref{s:asymp}. 

 
\subsection{Group-Based Power Loading}
\label{ray:group}

Another feedback scheme for comparison is based on sub-channel groups
as well as threshold-based state reduction, which is an enhancement of
the scheme discussed in Section \ref{ray:coin}.  Such a scheme was
originally proposed in \cite{CheBer08JSAC} for reducing feedback in
downlink orthogonal frequency division multiple access (OFDMA)
systems.
The idea is to divide the
sub-channels into $G$ nonoverlapping groups, each 
containing $m=N/G$ consecutive
sub-channels.  
Given a threshold $t$, the receiver informs the transmitter 
to use only those group in which all sub-channel gains exceed $t$.
The probability of this event 
is $e^{-m t}$, so that for large $N$ the average
amount of feedback required per coherence block for this scheme 
can be compressed to the entropy rate $G H_2(e^{-mt})$, 
which should not exceed the feedback constraint $B$.

\subsubsection{Asymptotic Rate Versus Feedback}
\label{ray:group:cap}

Assuming that the transmitter codes across coherence blocks
in frequency and time, the achievable rate is given by
the average mutual information (ergodic capacity),\footnote{
A somewhat more conservative rate is obtained by selecting
the code rate assuming that all active sub-channel 
gains $|h_{gi}|^2= t$ \cite{CheBer08JSAC}.
This does not change the asymptotic results in Section~\ref{ray:group}.}
\begin{align}\label{eq:mc-cap}
\Cg(m,t) 
&= N q \int_{t}^{\infty} e^{t-\tau} \log \left(1+
  \frac{P\tau}{Nq}\right) \intd\tau.
\end{align}
Note that the rate \eqref{eq:mc-cap} does not depend on the 
coherence block length $T$.
We wish to choose the feedback parameters 
$m$ and $t$ to maximize
$\Cg(m,t)$ subject to the feedback constraint
$G H_2(q) \le B$.
Although it appears to be difficult
to obtain an analytical characterization of the solution for arbitrary $N$,
the  solution 
for large $N$ and $B$ can be characterized as follows.

\begin{proposition} \label{Cap-CSI} %
  For fixed signal-to-noise ratio $P$, as $N\to\infty$ and that $B \to
  \infty$ with $N$, the capacity \eqref{eq:mc-cap} optimized over $t$
  and $m$ is given by\footnote{As $N\to\infty$ and $B\to \infty$ with
    $N$, $o(1)$ is vanishingly small and $O(1)$ is bounded by a finite
    constant.}
\begin{equation} \label{eq:Cg} %
  \Cg^\star =
  \begin{cases}
    \sqrt{\frac{P B}{u^\star}} \log(1 +u^\star) + o(1)
    & \textrm{if $ B < B_1$}\\
    \frac{B}{\log N} \log\left(1 + \frac{S \log N}{B} \log
      \left(\frac{N \log N}{B}\right)\right) + O(1)
    & \textrm{if $B_1 \le B < B_{max}$}\\
    P \left(\log N - (1+\eta_2)\log\log N \right) + O(1) & \textrm{if
      $B \ge B_{max}$}
  \end{cases}
\end{equation}
where
$B_1 = \frac{P}{u^\star} (\log N)^{2-\eta_1}$, $B_{max} = P (\log
N)^{2+\eta_2}$, and $u^\star \approx 3.92$ is the positive solution to
$\log (1+u) = 2\,u/(1+u)$.  In addition, $\eta_1 \in (0,2)$ and
$\eta_2 \in (0,1)$ are functions of $N$ such that $\eta_1 \to 0$ and
$\eta_2 \to 1$ as $N \to \infty$.
\end{proposition}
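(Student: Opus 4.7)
The plan is to reformulate the bivariate optimization over $(m,t)$ in terms of $s = Nq = Ne^{-mt}$, the expected number of active sub-channels, together with either the threshold $t$ or the group size $m$ (linked by $mt = \log(N/s)$). First I would decompose the ergodic rate as
\[
\Cg(s,t) = s\log\!\left(1 + \frac{Pt}{s}\right) + s\int_{0}^{\infty} e^{-v}\log\!\left(1 + \frac{Pv}{s + Pt}\right)\intd v,
\]
and bound the second (residual) term by $O(1)$ whenever $s + Pt = \Omega(1)$, so it contributes only an additive constant to all three regimes. Using the expansion $H_2(q) = -q\log q + O(q)$ for small $q$, the feedback constraint $(N/m)H_2(s/N) \le B$ reduces to $s\log(N/s) \le Bm\,(1+o(1))$, and the grouping variable can be treated continuously because rounding $m$ up introduces only an $o(1)$ relative change for large $N$.

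For regime~1 I would relax the integrality constraint $m \ge 1$ and push the feedback bound to equality, giving $t = B/s$. The objective becomes $s\log(1 + PB/s^2) + O(1)$. Setting its $s$-derivative to zero yields the first-order condition $\log(1+u) = 2u/(1+u)$ with $u = PB/s^2$, whose unique positive solution is $u^\star \approx 3.92$. Substituting $s^\star = \sqrt{PB/u^\star}$ gives $\Cg^\star = \sqrt{PB/u^\star}\log(1+u^\star) + o(1)$. The threshold $B_1$ is then the largest $B$ for which $m^\star = (s^\star/B)\log(N/s^\star) \ge 1$, i.e., for which $\sqrt{u^\star B/P} + \tfrac{1}{2}\log(PB/u^\star) \le \log N$; inverting this inequality produces $B_1 = (P/u^\star)(\log N)^{2-\eta_1}$ with $\eta_1 \to 0$ absorbing the subleading $\log(PB/u^\star)$ correction.

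For regime~2 the constraint $m \ge 1$ binds, so I would fix $m = 1$ and $t = \log(N/s)$ with the feedback bound $s\log(N/s) \le B$. A direct computation of the $s$-derivative of $s\log(1 + P\log(N/s)/s)$ shows it is positive throughout the relevant window, pushing the optimum to the boundary $s\log(N/s) = B$. Asymptotic inversion of this transcendental equation yields $\log(N/s) = \log(N\log N/B) + o(1)$ and $s = B/\log(N/s)$; substituting back produces the claimed expression $(B/\log N)\log\!\bigl(1 + (P\log N/B)\log(N\log N/B)\bigr) + O(1)$. For regime~3, once $B \ge B_{max} = P(\log N)^{2+\eta_2}$, the inner argument $(P\log N/B)\log(N\log N/B)$ is $O\bigl((\log N)^{-\eta_2}\bigr) \to 0$, so linearizing the outer logarithm via $\log(1+x) = x + O(x^2)$ collapses the expression to $Pt = P\log(N\log N/B)$, which by the choice of $\eta_2$ equals $P[\log N - (1+\eta_2)\log\log N] + O(1)$.

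The main obstacle will be uniform control of the $o(1)$ and $O(1)$ error terms across all three regimes. The residual integral, the binary-entropy expansion $H_2(q) = -q\log q + O(q)$, the transcendental inversion of $s\log(N/s) = B$, and the integer rounding of $m$ each contribute corrections of slightly different orders, and one must verify that these errors remain of the stated orders \emph{uniformly} in $B$, particularly near the boundaries $B \sim B_1$ and $B \sim B_{max}$ where the approximations are most delicate and where a naive bound could inflate the error from $O(1)$ to something divergent in $N$.
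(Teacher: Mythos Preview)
Your proposal is correct and follows essentially the same route as the paper's proof sketch: substitute $s=Nq$ (the paper writes $w$), reduce the feedback constraint to $st\le B$, optimize the leading term $s\log(1+Pt/s)$ to obtain the $u^\star$ condition in regime~1, then fix $m=1$ and invert $Nte^{-t}=B$ for regime~2, and finally identify the unconstrained optimum for regime~3. The only cosmetic difference is that the paper controls the ergodic-rate integral via the sandwich $Nq\log(1+Pt/(Nq))\le C(m,t)\le Nq\log(1+P(t+1)/(Nq))$ rather than your additive $O(1)$ residual decomposition; these are equivalent devices, and your anticipated difficulty---uniform control of the several $o(1)$/$O(1)$ corrections near $B_1$ and $B_{max}$---is exactly where the paper's sketch also leaves the details implicit.
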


A sketch of the proof is given in Appendix \ref{ap:prop-grp}.
The appendix also provides optimal threshold and 
group sizes as functions of $B$ and $N$,  
and expressions for $\eta_1$ and 
$\eta_2$.

The capacity 
expressions in Proposition \ref{Cap-CSI} 
are good approximations when 
$N$ is a few hundred and $B$ is a few tens of bits.
In fact, given fixed large enough $N$, the results can be understood
as follows: In the range of relative small to moderate amount of feedback $B$ 
(specifically, $P/u^\star \ll B < (P/u^\star) (\log N)^{2-\eta_1}$)
the capacity is proportional to $\sqrt{B}$.  If $B$ is greater than
$(P/u^\star) (\log N)^{2-\eta_1}$, the capacity increases, 
albeit comparatively slower, with $B$. Finally, the forward rate 
does not increase
when $B$ exceeds $P (\log N)^{2+\eta_2}$. 
The corresponding 
maximum achievable rate is given 
by \eqref{eq:Cg}, which is roughly
$P \log N$. However, the negative second-order (log log) term
in \eqref{eq:Cg} can be substantial, as will be seen in 
the subsequent numerical examples.
A specific numerical example will be provided in Section \ref{ray:group:num}. 

\subsubsection{Optimal Threshold and Group Size}
Expressions for optimal values of the threshold $t$
and group size $m$ as a function of $N$ and $B$
are derived in Appendix \ref{ap:prop-grp}. Here we outline the 
main characteristics of the optimized parameters. 

As expected, when the feedback is in the 
small to moderate range, $B < (P/u^\star) (\log N)^{2-\eta_1}$, 
the optimal group size $m^\star > 1$.
In this range, the threshold increases with feedback and is 
proportional to $\sqrt{B}$. The exact expressions for $m$ and
$t$ are given by \eqref{eq:thres1} and \eqref{eq:NG1}.
For $B \ge (P/u^\star) (\log N)^{2-\eta_1}$, the optimal 
group size $m^\star = 1$. 
Hence, for this range of feedback, the group-based 
scheme reduces to the previous
scheme described in Section \ref{ray:coin}.
Interestingly, here the threshold 
is a decreasing function of $B$ and is given by \eqref{eq:thres2}.
As the feedback increases, decreasing the 
threshold beyond a certain value decreases the capacity.
It is shown in Appendix \ref{ap:prop-grp} that the 
optimal threshold corresponds to feedback $B_{max} = P (\log N)^{2+\eta_2}$ 
and is slightly smaller than $ \log N$.
The exact expression is given by \eqref{eq:optto3}.


\subsubsection{Numerical Examples}
\label{ray:group:num}

\begin{figure}
\begin{center}
  \includegraphics[width = 5in]{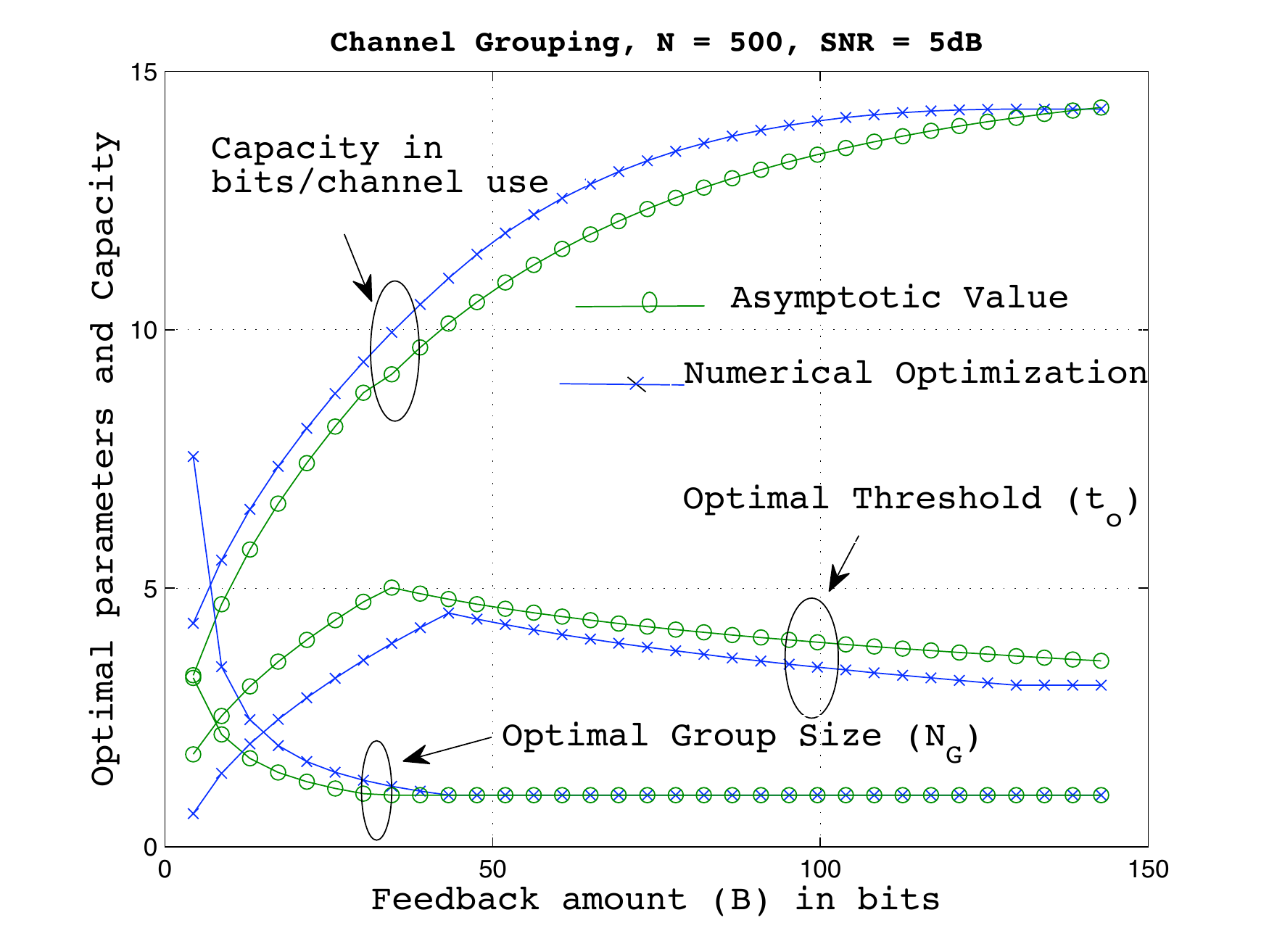}
  \caption{Comparison of numerically optimized values and asymptotic
    values versus feedback per coherence block for $N = 500$ and $P =
    5$ dB SNR.  For comparison, the water-filling capacity is 14.93
    bits per channel use.}
\label{fig:OptPara}
\end{center}
\end{figure}

The preceding asymptotic results are illustrated in
Fig.~\ref{fig:OptPara}.  The optimized
group size, threshold, and corresponding capacity
are plotted as function of the amount of feedback.
These results are obtained by optimizing the 
original capacity expression 
\eqref{eq:mc-cap} subject to the feedback constraint.\footnote{
Of course, in practice
$m$ can assume only positive integer values,
as opposed to the real values obtained from the optimization,
which are shown in Fig.~\ref{fig:OptPara}.}
%
Fig.~\ref{fig:OptPara} also shows the asymptotic analytical results. 
(The values plotted here are refined
versions of the expressions presented in Proposition \ref{Cap-CSI} and
are derived in Appendix \ref{ap:prop-grp}.)
The plot shows that the asymptotic values are close to the 
values obtained from numerical optimization.
As predicted by Proposition \ref{Cap-CSI}, the plot shows that as $B$ increases
from zero, the group size decreases, and the threshold increases.
However, once the group size crosses one
(when $B$ is about 40 bits/coherence block), 
the threshold decreases with $B$ and 
the capacity increases relatively slowly. 
Finally, for large amounts of feedback 
(say, greater than 135 bits/coherence
block for this example) the capacity and threshold saturate,
and increasing the feedback further does not improve performance.
Referring to \eqref{eq:eps1-def}-\eqref{eq:eps2-def},
these values correspond to $\eta_1 \approx \eta_2 \approx 0.25$.

\subsubsection{Performance comparison}
\label{ray:group:comp}
The optimized forward rate \eqref{eq:mc-cap} 
versus $B$ for the group-based scheme 
is shown in
Fig.~\ref{fig:iidraycaps20}. 
At 20 dB SNR, grouping sub-channels can provide about $15\%$ gain over 
threshold adjustment alone 
for small to moderate feedback rates. As the feedback 
increases, the two scheme become the same (since 
the group size converges to $m =1$ at 
around $B \approx 340$ bits). Recall that the 
advantage of the group-based scheme over 
 threshold adjustment alone is limited to 
the feedback range of $B \le (P/u^\star) (\log N)^{2-\eta_1}$.
The optimal threshold for the group-based
scheme is shown in Fig.~\ref{fig:iidthres}.
Observe that the optimal thresholds with and without grouping 
converge for $B \ge 340$. The thresholds behave 
in strikingly different manners for the two schemes when feedback is smaller.

It is also seen in Fig.~\ref{fig:iidraycaps20} that the VQ scheme 
performs substantially better than both LSC
schemes with and without grouping for small to moderate 
amounts of feedback. 
Namely, VQ saves between 100 and 150 feedback bits per
coherence block over a wide range of target forward rates. 


\subsection{Growth in achievable forward rate}
\label{s:asymp}

In this section we highlight several common features of all three
schemes discussed in Section \ref{s:ray}.
With sufficient amount of feedback, all three schemes 
correspond to the optimal
``on-off'' power allocation in which the power is
uniformly spread over active channels \cite{SunHon03GLO,SunHon08IT}. 
The optimal threshold is given by 
\begin{equation}
\label{eq:optto3}
t^\star = \left[ \log N - (1+\eta_2)\log\log N -\log P \right] + o(1)
\end{equation}
and the corresponding capacity is given by \eqref{eq:Cg} for $B \ge B_{max}$
(see Appendix \ref{ap:prop-grp}).

From \eqref{eq:eps2-def} in Appendix \ref{ap:prop-grp}, 
as $N\to\infty$, $\eta_2 \to 1$,
so that \eqref{eq:Cg}
states that $O(\log^3 N)$ feedback can achieve
the optimal $O(\log N)$ growth in achievable rate.
This result has been previously presented in \cite{SunHon08IT},
which considers the same threshold-based feedback scheme
considered here without grouping ($m=1$).
Also, the
numerical examples given in the previous section 
show that for reasonable values of $N$,
the amount of feedback needed to achieve the $O(P \log N)$
forward rate may be closer to $P \log^2 N$ than to $P \log^3 N$.

At the other extreme of small feedback $B \to 0$, 
the three schemes also perform similarly  and the forward rate 
 converges to the SNR $P$ (see Fig.~\ref{fig:iidraycaps20}). 
This limit is 
the ergodic capacity of a Rayleigh fading channel 
without feedback when the bandwidth becomes
large, i.e., $N \to \infty$.
For the VQ scheme, as $B\to 0$, the optimal parameters
converge to either $t \to 0, \eo  \to 0$, or $t \to \infty, \ei \to 1 $,
both of which imply the same transmission strategy. 
Clearly, for threshold adjustment without grouping, the optimal
threshold $t^\star \to 0$ as feedback becomes small.  
Finally, although Proposition \ref{Cap-CSI} does not cover the case
of finite $B$, it is easy to show that for the group-based scheme, as 
the amount of feedback
$B \to 0$, we have $m \to N$, $t \to 0$ and the 
achievable rate $C \to P$.

\section{Correlated Rayleigh fading sub-channels}
\label{s:raycorr}


In this section, we remove the assumption that the Rayleigh fading
sub-channels are independent.
Suppose the sequence of complex sub-channel coefficients
is a Gauss-Markov process generated by the following first-order autoregressive
model,
\begin{equation}
H_i = \alpha H_{i-1} + \sqrt{1-\alpha^2} \, W_i, \qquad i = 2, \ldots, N,
\end{equation} 
where $W_i$ are i.i.d.\ zero-mean CSCG random 
variables with unit variance, and
 $\alpha \in (0,1)$ represents the correlation between the sub-channels.
Each sub-channel gain $|H_i|^2$ is exponentially
distributed. The sequence of sub-channel gains
$\{|H_i|^2\}$ is a Markov process with joint second order probability
density function given by
\begin{equation}
g(x,y) = \frac{1}{1-\alpha^2}e^{-\frac{x+y}{1-\alpha^2}} I_0\left(\frac{2\alpha \sqrt{xy}}{1-\alpha^2}\right)
\end{equation}
where $I_0(\cdot)$ is the modified Bessel function of the first
kind and zero-order. 

Again, given a threshold $t \ge 0$, the state 
vector $\S$ is defined such that $S_i = 1$ if $|H_i|^2 \ge t$ and 
$S_i =0$ otherwise.  Also, $q = \Probk{S_i=1} = e^{-t}$. 
The sequence $\{S_i\}$ is a \emph{hidden} Markov process rather than i.i.d. 
Nonetheless, for a fixed $p$ and $t$, 
the rate-distortion trade-off is still given by~\eqref{eq:rdc}.
However, obtaining upper and lower
bounds on the required feedback rate \eqref{eq:rdc}, 
analogous to the two state
Markov model of Section \ref{s:2corr}, seems to be difficult due to 
the hidden Markov structure of $\S$.

We proceed by assuming that the receiver
approximates the sequence $\{S_i\}$ 
as a first-order Markov chain. 
Ignoring the higher order correlation
in $\S$ results in a larger feedback requirement
(or equivalently, an upper bound on $R(D)$ in \eqref{eq:rdc}).
 Using this upper bound as the 
feedback rate then gives an \emph{achievable} forward rate 
(a lower bound on capacity).
The transition probabilities for 
the first-order Markov model are, 
\begin{align}
\label{eq:ray-d10}
\delta_{10} &= \Pr \{S_i = 0|S_{i\pm1} = 1\}
=  1-  \frac{1}{q} \int_{t}^\infty \int_{t}^\infty g(x,y) \intd x \intd y,\\
\label{eq:ray-d01}
\delta_{01} &= \Pr \{S_i = 1|S_{i\pm1} = 0\}
=  \frac{q\delta_{10}}{1-q} \,. 
\end{align}

With the first-order Markov approximation, the problem 
 reduces to the 
one discussed in Section \ref{s:2corr}. Consequently, 
an achievable forward rate can be computed by solving the 
optimization problem \eqref{eq:opt-ray} over $p, t, \eo$ and $\ei$
with $q = e^{-t}$ and the constraint \eqref{eq:sp-fb} replaced 
by $i_u(\eo, \ei) \le R_f$. 
An expression for $i_u(\eo,\ei)$ is given in Appendix \ref{ap:corr-rd}.

\begin{figure}
\begin{center}
\includegraphics[width = 5in]{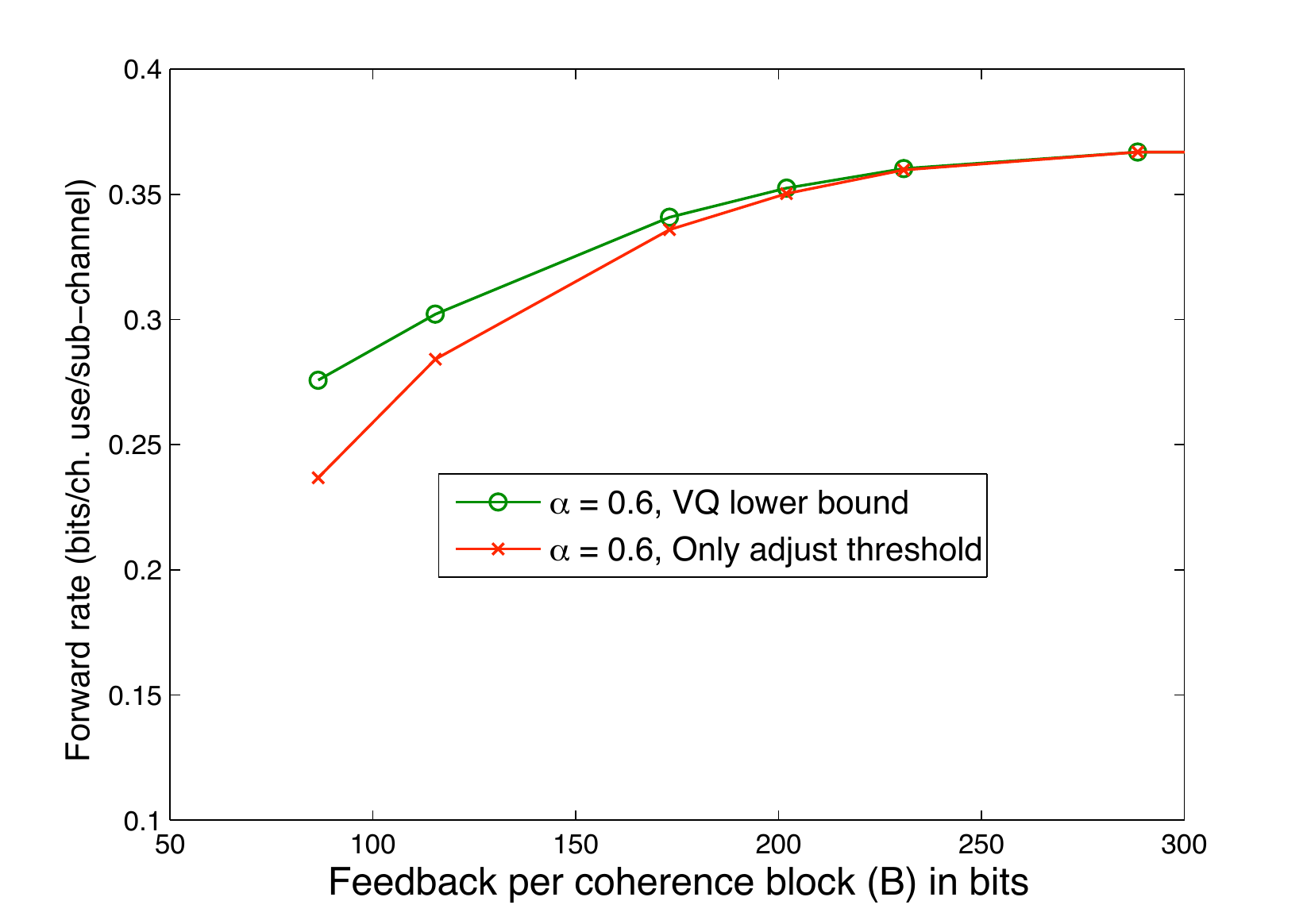}
\caption{Achievable forward rate versus the feedback rate for a VQ
  scheme with correlated fading sub-channels.  Also shown is the
  forward rate achieved with the threshold adjustment feedback scheme
  without grouping.  Other parameters are $N=500$, $P=20$ dB SNR. For
  comparison, the water-filling capacity with full channel state
  information at receiver is
  0.385 bits per sub-channel use (the same
  as in Fig.~\ref{fig:iidraycaps20}).}
\label{fig:corr-ray-caps20}
\end{center}
\end{figure}

Fig.~\ref{fig:corr-ray-caps20} plots this rate versus 
feedback per coherence block $B = N R_f$ for
$\alpha = 0.6$ and 20 dB SNR.
Also shown in Fig.~\ref{fig:corr-ray-caps20} 
is the forward rate 
achieved with the simpler LSC feedback scheme that 
encodes the state vector $\S$ and controls the 
amount of feedback by adjusting the threshold
without grouping.
Allowing no errors in the reconstructed channel 
state vector means that the minimum required feedback rate is 
$H(\S)$, which is upper bounded by
$H(S_1|S_0) = q H_2(\delta_{10}) + (1-q) H_2(\delta_{01})$.
This feedback scheme
 was considered for correlated sub-channels in \cite{SunHon03GLO}. 
In Fig.~\ref{fig:corr-ray-caps20}, 
VQ performs substantially better 
at small to moderate feedback rates. 
The forward rate values were also computed for $\alpha = 0.9$ (not shown
in Fig.~\ref{fig:corr-ray-caps20}), 
but there the comparison is inconclusive, since the 
lower bound on forward rate achieved with VQ is 
very close to the forward rate achieved with the LSC scheme.




The behavior of the optimal threshold and crossover probabilities $\eo, \ei$ is
the same as for independent sub-channels and hence is not shown here.
Similar to the independent sub-channels case, we can again apply the 
group-based scheme to correlated sub-channels.
Performance evaluation appears to be difficult, but the 
performance of the group-based scheme should be inferior to that of
the VQ scheme and at least as good than without grouping.

\section{Conclusions}
\label{s:con}



We have studied limited feedback of channel states for multicarrier
systems with two-state and Rayleigh sub-channels.  The asymptotic
performance has been characterized, using rate distortion theory, 
along with bounds on the performance loss with 
a finite number of sub-channels.
For Rayleigh channels, the threshold-based VQ scheme shows a substantial
improvement over lossless coding of reduced channel states
based on thresholding,
especially at low to moderate SNRs. 
Of course, this benefit comes at the price of 
higher source coding complexity (e.g., using graphical codes).
 

Our results have assumed perfect channel knowledge
at the receiver and have neglected the feedback 
overhead. In practice, the combined overhead for 
channel estimation and feedback compromises the benefits
of feedback. This becomes more important as the
channel coherence time decreases.
A model, which accounts for feedback overhead in 
a multicarrier time-division duplex
system is presented in \cite{AgaGuo08ICC}.
There the overhead is optimized, assuming the lossless
feedback code presented in Section \ref{ray:group}.
Training and feedback overhead in the context of beamforming has
been studied in \cite{SanHon07WCNC,KobCai08ISIT}. That
approach may also be appropriate for the multicarrier 
scenario considered here.

We have also assumed a noiseless feedback link. 
For the schemes considered here a noisy feedback link 
requires additional overhead
in the form of channel coding or higher transmit power.
Other alternatives include analog CSF
(e.g., see \cite{MarHoc06TSP, CaiJin07X}),
which gives noisy estimates at the transmitter,
and sending a pilot signal 
from receiver to transmitter at the beginning of each 
coherence block\cite{QinBer08JWN} (assuming channel reciprocity applies). 
Comparative advantages and disadvantages of these schemes 
remain to be studied.


Finally, the results presented here can conceivably 
be extended to more elaborate system and channel models,
such as continuous fading, instead of block fading
(e.g., see \cite{YunChu07IWCMC,HuaHea08SP}),
and Multiple-Input Multiple-Output (MIMO) OFDM.
Such systems typically operate at low
SNRs per antenna (or coefficient)
and hence can benefit substantially from adaptive 
power loading\cite{KhaMon07TWC, HooCai08TCOM}. 
Limited CSF schemes for downlink and uplink OFDMA
are presented in \cite{AgaMaj08JSAC, CheBer08JSAC, AgaHon07Crown}
and the references therein. (See also the comprehensive
survey of the limited feedback literature in \cite{LovHea08JSAC}.)
Extensions of the VQ scheme presented here
to those settings is also left for future work.

\appendices

\section{Proof of Proposition \ref{prop:bin-fix}}
\label{ap:prop:bin-fix}

First we introduce some notation, then describe the construction of
a fixed-length, constant-composition feedback code, and
finally give the performance bounds as a function of $N$.

Let $\mathcal{P}_{\s}(1)$ represent the \emph{empirical probability}
of ones in a length $N$ binary vector channel state $\s$. Namely, 
$\mathcal{P}_{\s}(1) = q$ implies that
$\sum_{i = 1}^{N} s_i = qN$.
A similar definition holds for $\mathcal{P}_{\hat{\s}}(1)$. 
Furthermore, for any pair of vectors $(\hat{\s}, \s)$, 
 $\mathcal{P}_{\hat{\s}|\s}(0|1)$ represents the
empirical probability of zeroes in $\hat{\s}$ at the positions
where $\s$ has ones. In other words, 
$\mathcal{P}_{\hat{\s}|\s}(0|1) = \eo$ implies that
$\sum_{i = 1}^{N} 1_{\{\hat{s}_i =0, s_i =1\}} = \eo \mathcal{P}_{\s}(1) N$.
Let $\mathbf{\mathcal{T}}_{\S;q}$ represent the set of all 
constant composition length $N$ channel state vectors $\s$
which have $qN$ number of ones, that is, 
$\mathbf{\mathcal{T}}_{\S;q} = \{\s : \mathcal{P}_{\s}(1) = q\}$. 
Similarly, define the set of constant composition 
power loading vectors 
$\mathbf{\mathcal{T}}_{\hat{\S};p} = \{\hat{\s} : \mathcal{P}_{\hat{\s}}(1) = p\}$.

Let the {\em feedback codebook} be a subset
$\mathbf{\mathcal{T}}_{\hat{\S}|\S} \subset \mathbf{\mathcal{T}}_{\hat{\S};p}$.
The amount of feedback required per coherence block is therefore
$\log_2|\mathbf{\mathcal{T}}_{\hat{\S}|\S}|$ bits.
A vector $\s \in \mathbf{\mathcal{T}}_{\hat{\S}|\S}$ is said to
\emph{cover} $\s \in \mathbf{\mathcal{T}}_{\S;q}$ if 
$\mathcal{P}_{\hat{\s}|\s}(0|1) = \eo$ and 
$\mathcal{P}_{\hat{\s}|\s}(1|0) = \ei$. Note that 
depending on the size of the codebook, all the vectors in 
$\mathbf{\mathcal{T}}_{\S;q}$ might not be covered. 
The feedback occurs as 
follows: Every channel realization vector is matched to the 
closest (in hamming distance) typical vector 
$\s \in \mathbf{\mathcal{T}}_{\S;q}$. The receiver then looks 
up a $\hat{\s}$ in $\mathbf{\mathcal{T}}_{\hat{\S}|\S}$ 
which covers $\s$ feeds back the index of $\hat{\s}$. 
If there is no such $\hat{\s}$ then a random index 
is fed back. Therefore, the elements of the codebook 
$\mathbf{\mathcal{T}}_{\hat{\S}|\S}$ should be chosen carefully
to minimize the distortion or, equivalently, maximize the forward rate.
Next, along the lines of \cite{Goblic62}, we will evaluate the 
average performance assuming that the subset $\mathbf{\mathcal{T}}_{\hat{\S}|\S}$
is chosen at random. Then by usual argument we can claim that 
there must exist a structured fixed-length constant composition 
codebook that does at least as well.

Consider the performance of the codebook as a function of $N$.
The Type covering lemma \cite{CsiKor81} suggests that 
we can find a codebook with size 
$|\mathbf{\mathcal{T}}_{\hat{\S}|\S}| = g(N) 2^{i(\eo, \ei)N}$,
where $g(N)$ is a polynomial, such that every vector in
$\mathbf{\mathcal{T}}_{\S;q}$ is covered. Here, 
 we use random coding arguments to get an estimate of
 $g(N)$. Suppose $\mathbf{\mathcal{T}}_{\hat{\S}|\S}$ consists of
 $M$ vectors that are randomly and independently\footnote{We
have $N\choose{Np}$ choices in each drawing.} drawn from 
$\mathbf{\mathcal{T}}_{\hat{\S};p}$.  

Given a vector $\s \in \mathbf{\mathcal{T}}_{\S;q}$, probability that 
it will be not be \emph{covered} by the $M$ randomly 
chosen vectors is given by 
\begin{align}
p_{n} &= [1-p_c]^M 
\label{eq:pncineq}
\le e^{-Mp_c}
\end{align}
where 
\begin{equation}
p_c =  \frac{{qN \choose \eo qN} {(1-q)N \choose \ei (1-q)N}} {{N\choose pN}}.
\end{equation}
Further using Robbin's approximation \cite{RobbinAMM55, CsiKor81} for the factorial
\begin{equation}\label{eq:rob}
\sqrt{2\pi}\, n^{n+\frac{1}{2}} e^{-n + \frac{1}{12(n+1)}} \le n! \le \sqrt{2\pi} \, n^{n+\frac{1}{2}} e^{-n + \frac{1}{12n}},
\qquad n\ge 1,
\end{equation}
it is straightforward to show that
\begin{equation}\label{eq:pcineq}
p_c \ge N^{-1/2} 2^{-K_2}\,2^{-i(\eo,\ei) N}
\end{equation}
where the mutual information $i(\eo,\ei)$ is given by \eqref{eq:rrfin} 
and $K_2 = \log_2(\sqrt{2\pi}e^{5/12}) - \frac{1}{2}\log_2(p(1-p))$ is 
a constant. Now choosing $M = (\log N) N^{1/2} 2^{K_2}\,2^{i(\eo,\ei) N}$
and combining \eqref{eq:pncineq} and \eqref{eq:pcineq} gives
$p_{n} \le 1/N$. 
Therefore, with this codebook the feedback 
rate per sub-channel per coherence block given by 
\begin{equation} \label{eq:IfiniteN}
 \frac{\log_2|\mathbf{\mathcal{T}}_{\hat{\S}|\S}|}{N} = \frac{\log_2M}{N} = 
i(\eo,\ei) + \frac{(\log_2N + 2K_2 + 2\log_2(\log N))}{2N}
\end{equation} 
converges to the mutual information $i(\eo,\ei)$. 

Next we bound the average forward rate achieved by this codebook. 
For this we need to account for the variations
 in the channel gain vector $\S$. Consider the set of 
channel gain vectors with fraction of ones in the range
$(q(1-\epsilon), q(1+ \epsilon))$, that is, say
$\mathbf{\mathcal{T}}_{\S;q}^\epsilon = \{\s: \mathcal{P}_\s(1) \in (q(1-\epsilon), q(1+\epsilon))\}$.
Using Chernoff's inequality \cite{ChernoAMS52}, the following is easily seen
\begin{equation}\label{eq:cher-bd}
\Pr\bigg\{\bigg|\sum_{i = 1}^N S_i - qN \bigg| \ge q\epsilon N\bigg\} \le p_u = 2 \exp \left(-\frac{qN \epsilon^2}{4} \right),
\qquad 0 \le \epsilon \le 2(1-q), 
\end{equation}
which in turn implies that 
$\Pr\{\mathbf{\mathcal{T}}_{\S;q}^\epsilon\} \ge 1-  p_u$.
Using the definition of $\epsilon$ above, the ergodic
capacity is lower bounded as
\begin{align}\label{eq:fix-lb-temp}
C_{fixed} &\ge q(1 - \eo - \epsilon)(C_1 - C_0)\left(1-  p_u \right) (1-p_n) + p C_0\\
& \ge C - C \left(\epsilon + p_u + \frac{1}{N}\right),
\end{align}
where $C = q(1 - \eo)(C_1 - C_0) + pC_0$. The loss factors $1-p_n$ 
and $1-p_u$ in \eqref{eq:fix-lb-temp} account for the fact that channel gain 
vectors might not be covered or, might not be in the set 
$\mathbf{\mathcal{T}}_{\S;q}^\epsilon$, respectively. Further, choosing 
$\epsilon = \sqrt{\frac{2\log (Nq)}{Nq}}$ gives a tight lower bound as
\begin{equation}\label{eq:cfix-final1}
C_{fixed} \ge C - C \left(\frac{\sqrt{2\log (Nq)}+2}{\sqrt{Nq}} + \frac{1}{N}\right) \ .
\end{equation}
%
In summary, since \eqref{eq:cfix-final1}  bounds the average performance of 
a randomly generated codebook with $M$ codewords, there
must exist a codebook of size $M$ which performs at least
as good as this lower bound.   

Lastly, we consider the convergence rate to the upper bound given by
Proposition \ref{prop:RDopt}.
Note that required feedback rate \eqref{eq:IfiniteN}
is more than mutual information $i(\eo, \ei)$.
If  we do not wish to allow the feedback to exceed $i(\eo, \ei)$ bits per
sub-channel,  
an additional distortion of $\delta$ can be introduced so that 
$\eo$ and $\ei$ are replaced by $\eo +\delta$
and $\ei+ \delta'$, respectively, where $\delta' = {q\delta}/(1-q)$
(so that the input power constraint \eqref{eq:qep} is satisfied). 
Using Taylor series expansion, for a small 
enough $\delta$ we have 
\begin{equation}\label{eq:temp-ineq2}
 - 2 q[H_2'(\eo) + H_2'(\ei)] \delta \le i(\eo+\delta, \ei + \delta' ) -  i(\eo, \ei)
 \le - \frac12 q[H_2'(\eo) + H_2'(\ei)] \delta.
\end{equation}
Substituting  $\eo +\delta$ and $\ei+ \delta'$ for $\eo$ and 
$\ei$ in \eqref{eq:IfiniteN} and
using \eqref{eq:temp-ineq2}, for large enough $N$ and 
small enough $\delta$ we have
\begin{equation}
 \frac{\log_2|\mathbf{\mathcal{T}}_{\hat{\S}|\S}|}{N} \le i(\eo,\ei)  - \frac12 q[H_2'(\eo) + H_2'(\ei)] \delta + \frac{\log_2 N}{N} \ .
\end{equation}

Now choosing $\delta = K_1 \frac{\log_2 N}{N}$ where $K_1 = \frac{2}{q[H_2'(\eo) + H_2'(\ei)]}$,
gives that the number of feedback bits per sub-channels 
$\frac{\log_2|\mathbf{\mathcal{T}}_{\hat{\S}|\S}|}{N} \le i(\eo, \ei)$. 
The additional distortion of $\delta$ will result in 
a loss in capacity which can be quantified by replacing 
$\eo$ by $\eo+\delta$ in \eqref{eq:fix-lb-temp} which gives a lower
bound on forward rate as
\begin{equation}\label{eq:cfix-final2}
C_{fixed} \ge C - C \left(\frac{\sqrt{2\log (Nq)}+2}{\sqrt{Nq}} + \frac{1}{N} + K_1 \frac{\log_2 N}{N} \right)
\end{equation}
which for large enough $N$ can be further lower bounded as
\eqref{eq:cfix}.


\section{Proof of Proposition \ref{prop:bin-var}}
\label{ap:prop:bin-var}

We again  resort to the random coding techniques as in \cite{Pinkst67}, 
assuming that corresponding to each channel state vector $\s$, 
the power loading vector $\hat{\s}$ is produced with 
Bernoulli-$p$ distribution.
A randomly generated codeword 
$\hat{\s}$ is admitted only if it satisfies the empirical 
probabilities $\mathcal{P}_{\hat{\s}|\s}(0|1) = \eo$ and 
$\mathcal{P}_{\hat{\s}|\s}(1|0) = \ei$, where $\eo$ and $\ei$ are
fixed and  $q(1-\eo) + (1-q)\ei = p$.
 This will ensure that 
averaged over all the state vectors, 
the number of sub-channels used for transmission are given 
by $pN$ and average number of unused
good sub-channels are $q\eo N$. Next, we shall find the
expected variable-length encoder rate, averaged over 
this ensemble of codes, and then, by usual argument, we can assert that 
there must exist at least one set of $\{\hat{\s}\}$ that gives the 
performance as good as the average.

Let $L$ represent the random variable 
denoting the fraction of ones in the state vector $\S$.
Clearly, $H(L) \le \log_2(N+1)$.
If a variable-length feedback codebook with average rate of 
$R_f$ bits per sub-channel per coherence block is used, then
we can write
$N R_f \le H(\hat{\S}) + 1$.
The rate can be further upper bounded as
$N R_f \le H(\hat{\S}, L) + 1 = H(\hat{\S}|L) + H(L) + 1$.
Averaging over the random code book selection
we get that,  
\begin{equation}\label{eq:varrate1}
N R_f \le E_{\hat{\S}}[ H(\hat{\S}|L)] + H(L) + 1.
\end{equation}
Corresponding to a channel state vector with 
$L = l$, define
$q_{l}$ as the probability  that a randomly 
drawn codeword $\hat{\s}$ is admissible. 
Then we have,
\begin{equation}\label{eq:qs}
q_{l} = {lN \choose \eo lN} {(1-l)N \choose \ei (1-l)N}\, p^{n(l)N} (1-p)^{(1-n(l))N},
\end{equation} 
where $n(l) = (1-\eo) l + \ei (1-l)$. Further, it is argued in \cite{Pinkst67}
that given the geometric distribution $p(k|l) = q_{l} (1- q_{l})^{k-1}$
we have
\begin{align}
E_{\hat{\S}}[ H(\hat{\S}|l)] &= - \sum_{k =1}^{\infty} p(k|l) \log_2 p(k|l)\\
\label{eq:varrate2}
&\le - \log_2 q_{l} +\log_2e.
\end{align}
Combining \eqref{eq:varrate1} and  \eqref{eq:varrate2} and 
using the fact that $H(L) \le \log_2(N+1)$ we have,
 \begin{equation}\label{eq:varrate3}
N R_f \le -\sum_{l=0}^{N} (\log_2 q_l) \Pr\{L =l\} + \log_2e+ \log_2(N+1) + 1.
\end{equation}
Further, applying Robbin's approximation \eqref{eq:rob} to \eqref{eq:qs} we have
\begin{equation}\label{eq:qsub}
  \begin{split}
-\log_2(q_l) \le -lNH_2(\eo) &- (1-l)NH_2(\ei) \\
+& n(l)N\log_2(p) + (1-n(l))N\log_2(1-p) + \log_2 N + K_3,
  \end{split}
\end{equation}
where $K_3 = \frac{1}{2}\log_2[e^8(2\pi)^2\eo\ei(1-\eo)(1-\ei)]$.
Substituting \eqref{eq:qsub} into \eqref{eq:varrate3} and using the fact that 
$E[n(l)] = p$ and $E[l] = q$, we get, 
\begin{equation}\label{eq:RfNvar}
 R_f \le  i(\eo, \ei) + \frac1N (\log_2e+K_3+\log_2 N +\log_2(N+1) +1) \ .
\end{equation}
Since this rate exceeds $i(\eo,\ei)$, similar to 
Appendix \ref{ap:prop:bin-fix}, we can introduce additional
distortion so that $\eo$ and $\ei$ are replaced by 
$\eo + \delta$ and $\ei + \frac{q \delta}{1-q}$. Therefore,
using \eqref{eq:temp-ineq2}, for large enough $N$ and 
small enough $\delta$, \eqref{eq:RfNvar} yields
\begin{equation}
R_f \le  i(\eo, \ei) - \frac12 q[H_2'(\eo) + H_2'(\ei)] \delta + \frac{3\log_2 N}{N}
\end{equation}
Finally, choosing $\delta = \frac{6 \log_2 N}{q[H_2'(\eo) + H_2'(\ei)] N }$ gives 
$R_f \le i(\eo, \ei)$ and capacity as \eqref{eq:cap-var}.

\section{Derivation of $i_l(q_{00},q_{01},q_{10},q_{11})$ and $i_u(\eo, \ei)$}
\label{ap:corr-rd}

The lower bound can be explicitly computed as follows
\begin{align}
\label{eq:rrlb}
i_l(q_{00},q_{01},q_{10},q_{11}) &= I(S_1;\hat{S}_1|S_0)\\
&= H(\hat{S}_1|S_0) - H(\hat{S}_1|S_1, S_0).
\end{align}
Each entropy term is further computed as
\begin{equation}\label{eq:ent1}
  H(\hat{S}_1|S_0) = (1-q) H_2\left(\frac{q_{00}+q_{01}}{1-q}\right) + q H_2\left(\frac{q_{10}+q_{11}}{q}\right)
\end{equation}
and
\begin{equation}\label{eq:ent2}
  \begin{split}
    H(\hat{S}_1|S_1, S_0) = (1-q) & (1-\delta_{01})
    H_2\left(\frac{q_{00}}{(1-q)(1-\delta_{01})}\right) + q
    \delta_{10} H_2\left(\frac{q_{10}}{q \delta_{10} }\right) \\
    &+ (1-q)\delta_{01}
    H_2\left(\frac{q_{01}}{(1-q)\delta_{01}}\right) + q(1-\delta_{10})
    H_2\left(\frac{q_{11}}{q(1-\delta_{10})}\right).
  \end{split}
\end{equation}
Next we compute the upper bound $f_2(\eo, \ei)$. Recall that, 
in order to arrive at the upper bound, we have 
assumed that conditioned on $S_i$, $\hat{S}_i$ is 
independent of all other elements in $\S$, thus
\begin{align}
\label{eq:rrub}
i_u(\eo, \ei) &= I(S_1;S_2,\hat{S}_1|S_0)\\
&= H(S_2,\hat{S}_1|S_0) - H(S_2, \hat{S}_1|S_1, S_0)\\
&= H(S_2|S_0) + H(\hat{S}_1|S_0,S_2) - H(\hat{S}_1|S_1) - H(S_2|S_1).
\end{align}
Each entropy term can be further computed as
\begin{align}
\label{eq:ent3}
H(S_2|S_0) & = qH_2\left((1-\dio)^2 + \dio \doi\right) + (1-q) H_2\left(\doi(1-\dio) + (1-\doi)\doi\right),\\
\label{eq:ent4}
H(\hat{S}_1|S_1) & = qH_2(\eo) + (1-q) H_2(\ei), \\
\label{eq:ent5}
H(S_2|S_1) & = qH_2(\dio) + (1-q)H_2(\doi)
\end{align}
and
\begin{equation}\label{eq:ent6}
  \begin{split}
    H(\hat{S}_1|S_0,S_2) = &\left((1-\doi)^2(1-q)+ \dio^2q\right)
    H_2\left(w_{00}\right)\\
    & \qquad +2\left(\doi(1-\doi)(1-q)
      +\dio(1-\dio)q\right)H_2\left(w_{01}\right)\\
    & \qquad\qquad+\left(\doi^2(1-q)+(1-\dio)^2q\right)H_2\left(w_{11}\right),
  \end{split}
\end{equation}
where the probabilities in the argument of binary entropy
functions are defined as,
\begin{equation} \label{eq:w01} 
w_{s_0s_2} = P_{\hat{S}_1|S_0,S_2}(0|s_0,s_2), \quad s_0,s_2=0 \text{ or } 1.
\end{equation}
We note that $w_{10} = w_{01}$ and the probabilities can be explicitly
computed as
\begin{align}
w_{00} & = \frac{(1-\ei)(1-\doi)^2(1-q)+ \eo \dio^2q}{(1-\doi)^2(1-q)+ \dio^2q},\\
w_{01} & = \frac{(1-\ei)\doi(1-\doi)(1-q)+\eo \dio(1-\dio)q}{\doi(1-\doi)(1-q)+\dio(1-\dio)q},\\
w_{11} & = \frac{(1-\ei)\doi^2(1-q)+\eo(1-\dio)^2q}{\doi^2(1-q)+(1-\dio)^2q} \ .
\end{align}

\section{Proof Sketch of Proposition \ref{Cap-CSI}}
\label{ap:prop-grp}

Consider the maximization of the capacity $\Cg(m, t)$ in
\eqref{eq:mc-cap} over the group size $m$ and threshold $t$ subject to
$G H(q) \le B$.  The capacity can be bounded as
\begin{equation}  \label{eq:lCu}
  Nq \log \left(1+ \frac{P t}{N q}\right)
  \le C(m,t)
  \le Nq \log \left(1+ \frac{P (t+1)}{N q }\right).
\end{equation}
The lower bound is simply by observing that the logarithm term in
\eqref{eq:mc-cap} takes on its minimum value at the boundary $\tau=t$,
whereas the exponential term integrates to 1.  The upper bound can be
shown using the fact that $\int_0^\infty e^{-x} \log(x + a) \,\intd
\tau < \log(1+a)$ for all $a>0$.
Clearly, the maximum value of $C(m,t)$ subject to $G H(q) \le B$ is no
greater than the maximum value of the upper bound of $C(m,t)$ in
\eqref{eq:lCu} subject to the same constraint.
Next we obtain a solution to the latter optimization problem and show that 
for large $B$ and $N$, it provides a good approximation to the 
solution of the original optimization problem.  
Without loss of generality, substituting $w = N q$ the optimization problem 
can be written as
\begin{equation}\label{eq:ubopt}
\max_{w, t}\,\, 
\bar{C} = w \log \left(1+ \frac{P (t+ 1)}{w}\right), \,\, \textrm{subject to:} \,\, w\, t \le B.
\end{equation}

Assuming that the feedback constraint 
is tight, i.e., $w\, t = B$, the 
optimum $w$ must satisfy
\begin{equation}\label{eq:dervw}
(1+u + P/w) \log (1+u + P/w) = (1+2 u + P/w),
\end{equation}
where $u = P B/{w^2}$.
A closed-form solution to \eqref{eq:dervw} seems difficult, 
however, insight can be obtained by assuming that $N, B$ are large.
In addition, we assume that the optimal $w$ increases with $B$ 
such that $u \gg P/w$ or, equivalently, as $B \to \infty$, 
$w/B \to 0$. We will see later that 
this is indeed true. Therefore, observing that the $P/w$ terms in \eqref{eq:dervw}
are small compared to $u$, the optimal $w^\star = \sqrt{\frac{PB}{u^\star}} + o(1)$,
where $o(1)$ is vanishingly small as $B \to \infty$, and 
$u^\star$ is the solution to $(1+u^\star) \log (1+u^\star) = (1+2 u^\star)$.
Since we assume that $w t = B $, 
solving \eqref{eq:ubopt} gives
\begin{align}
\label{eq:thres1}
t^\star &= \sqrt{\frac{u^\star B}{P}} + o(1)\\
\label{eq:NG1}
m^\star &= \sqrt{\frac{P}{u^\star B}} \log \frac{N}{\sqrt{P\,B/u^\star}} + o(1)\\
\label{eq:cub1}
\bar{C} &= \sqrt{\frac{P B}{u^\star}} \log (1 + u^\star)  + o(1) .
\end{align}
The parameter values
satisfy the original feedback constraint $G H(q) = B$  
and in fact the lower bound in \eqref{eq:lCu} also behaves 
as  \eqref{eq:cub1} (although the value associated with the 
$o(1)$ term change). This implies that 
the optimal parameters that maximize the capacity $C(m,t)$ 
satisfy \eqref{eq:thres1} and \eqref{eq:NG1}, and the 
capacity $\Cg^\star$ is approximated by \eqref{eq:cub1} to within a
vanishingly small term.

Note that \eqref{eq:NG1} implies that for $m^\star > 1$ we should 
have feedback in the range $B < \frac{P}{u^\star} (\log N)^{2-\eta_1}$
for large $N$ where  $\eta_1 \in (0,2)$ is the solution to 
\begin{equation}\label{eq:eps1-def}
 \log N - \log \left[\frac{P}{u^\star} (\log N)^{1-\frac{\eta_1}{2}}\right] = (\log N)^{1-\frac{\eta_1}{2}}.
\end{equation}
It is easy to see that $\eta_1 \to 0$ as $N \to \infty$. Next we solve for the 
optimal parameters when $B \ge \frac{P}{u^\star} (\log N)^{2-\eta_1}$. 
Again we first solve the upper bound 
maximization problem \eqref{eq:ubopt} with 
$m = 1$ or, equivalently $q = e^{-t}$.
Namely
\begin{equation}\label{eq:ubopt2}
  \max_{t}\, 
  \bar{C} = N e^{-t} \log \left(1+ \frac{P (t+1)}{N e^{-t}}\right),
  \,\, \textrm{subject to:} \,\,  N t e^{-t}\le B.
\end{equation}
Assuming that the feedback constraint is tight, i.e.,
 $N t e^{-t} = B$, 
we get the optimal threshold and upper bound on capacity 
\begin{align}
\label{eq:thres2}
t^\star &= \log \frac{N \log N}{B} + o(1)\\
\label{eq:cub2}
\bar{C} &= \frac{B}{\log N} \log\left(1 + \frac{P \log N}{B}  
  \log\frac{N \log N}{B}\right) + O(1)
\end{align}

Again, it can be checked that with appropriate adjustments 
to the $o(1)$ and $O(1)$ terms in \eqref{eq:thres2} and \eqref{eq:cub2}, 
respectively, the threshold \eqref{eq:thres2}
satisfies the original feedback constraint $N H(e^{-t}) = B$ and the 
lower bound in \eqref{eq:lCu}
also behaves as  \eqref{eq:cub2}. This implies that 
the threshold, which maximizes the capacity $\Cg$ in the feedback range 
$ B \ge \frac{P}{u^\star} (\log N)^{2-\eta_1}$ satisfies
 \eqref{eq:thres2}, $m^\star = 1$ and the 
capacity $\Cg^\star$ is also given by \eqref{eq:cub2}.

Furthermore, note that as $B$ increases, the threshold 
\eqref{eq:thres2} decreases. However, decreasing the threshold 
beyond a certain optimal value decreases the capacity upper 
bound in \eqref{eq:ubopt2}. 
The optimum value of the threshold that maximizes  
the upper bound in \eqref{eq:ubopt2} is given by \eqref{eq:optto3} and 
the corresponding upper bound is given in \eqref{eq:Cg}, corresponding to 
$B > B_{max}$,
where $\eta_2 \in (0,1)$ is the solution to 
\begin{equation}\label{eq:eps2-def}
 \log N - \log [P (\log N)^{1+\eta_2}] = (\log N)^{(1+\eta_2)/2}. 
\end{equation}
Clearly, $\eta_2 \to 1$ as $N \to \infty$. Substituting $m = 1$
and \eqref{eq:optto3} into
the lower bound in \eqref{eq:lCu} gives 
that the lower bound and 
hence the capacity also behave as in \eqref{eq:Cg}.
 Therefore, the optimal threshold that maximizes the
capacity is given by \eqref{eq:optto3} with adjusted $o(1)$ term. 
Corresponding to \eqref{eq:optto3}, the maximum required feedback is given by 
$B_{max} = N H(e^{-t}) = P(\log N)^{2+\eta_2} + o(\log^2N)$.

\bibliographystyle{IEEEtran}
\bibliography{def,dguo,alljab,csf-journ1}
\end{document}